\definecolor{gray}{rgb}{0.5,0.5,0.5}
\definecolor{shadecolor}{rgb}{0.5,0.5,0.5}
\protected\def\includeGraphics{\@testopt\roy@includegraphics{}}
\def\roy@includegraphics[#1]#2{
    \begingroup
        \edef\x{\endgroup\noexpand\includegraphics[#1]}\x{#2}
}
\tiny\color{gray},
\tikzset{
    state/.style={
        circle,
        draw=black, very thick,
        minimum size = 1cm,
    },
    smallstate/.style={
        circle,
        draw=black, very thick,
        minimum size = 3mm,
        inner sep=0.8mm,
    },
    smallfinalstate/.style={
        circle,
        double,
        draw=black, very thick,
        minimum size = 3mm,
        inner sep=0.8mm,
    },
    description/.style={
        rectangle,
        draw=white,
        minimum height = 2em,
        inner sep = 0pt,
    },
    tuborg/.style={
        decorate
    },
}
\newcommand\States{\ensuremath{Q}}
\newcommand\FinalStates{\ensuremath{F}}
\newcommand\NrStates{\ensuremath{\lvert\States\rvert}}
\newcommand\State[1][{}]{\ensuremath{q_{#1}}}
\newcommand\StartState{\ensuremath{\State[0]}}
\newcommand\ErrorState{\ensuremath{\State[e]}}
\newcommand\Processors{\ensuremath{P}}
\newcommand\NrProcessors{\ensuremath{\lvert\Processors\rvert}}
\newcommand\Processor[1][{}]{\ensuremath{p_{#1}}}
\newcommand\Cores{\ensuremath{C}}
\newcommand\NrCores{\ensuremath{\lvert\Cores\rvert}}
\newcommand\Chunk[1][{}]{\ensuremath{c_{#1}}}
\newcommand\Length[1][{}]{\ensuremath{L_{#1}}}
\newcommand\Weight[1][{}]{\ensuremath{{w_{#1}}}}
\newcommand\Matched[1][{}]{\ensuremath{{m_{#1}}}}
\newcommand\StateMap[1][{}]{\ensuremath{{\mathcal{L}_{#1}}}}
\newcommand\IStates[1][{}]{\ensuremath{{\mathcal{I}_{#1}}}}
\newcommand\MaxNrIStates{\ensuremath{{\mathcal{I}_{\text{max}}}}}
\newcommand\TF{\ensuremath{\delta}}
\newcommand\TFE{\ensuremath{\TF^*}}
\newcommand\Sym{\ensuremath{\sigma}}
\newcommand\STR{\textit{Str}}
\newcommand\BigO{\ensuremath{\mathcal{O}}}
\newcommand\falset{\text{\em false}}
\newcommand\truet{\text{\em true}}
\DeclareMathOperator{\StartPos}{StartPos}
\DeclareMathOperator{\EndPos}{EndPos}
\titlerunning{Speculative DFA Membership Tests}
\newcommand\NrPCREs{299}
\newcommand\NrPROSITEs{110}
\begin{document}

\title{A Speculative Parallel DFA Membership Test for Multicore, SIMD and Cloud
Computing Environments}

\author{Yousun Ko
        \and Minyoung Jung
        \and \makebox{Yo-Sub Han}
        \and Bernd Burgstaller}
\institute{Yousun Ko
        \and Minyoung Jung
        \and Yo-Sub Han
        \and Bernd Burgstaller
        \at Department of Computer Science,
        Yonsei University, Seoul, Korea
        \\\email{bburg@cs.yonsei.ac.kr}
}

\date{}

\maketitle

\begin{abstract}
We present techniques to parallelize membership tests for
Deterministic Finite Automata (DFAs). Our method searches
arbitrary regular expressions by matching multiple
bytes in parallel using speculation. We partition the input
string into chunks, match chunks in parallel, and combine the 
matching results. Our parallel
matching algorithm exploits structural DFA properties to minimize
the speculative overhead. Unlike previous approaches, our
speculation is {\em failure-free\/}, i.e., (1) sequential
semantics are maintained, and (2) speed-downs are avoided
altogether.
On architectures with a SIMD gather-operation for indexed memory loads,
our matching operation is fully vectorized.
The proposed load-balancing scheme uses
an off-line profiling step to determine the matching capacity of each
participating processor. Based on matching capacities, DFA matches
are load-balanced on inhomogeneous parallel architectures such as
cloud computing environments.

We evaluated our speculative DFA membership test for a representative 
set of benchmarks from the Perl-compatible
Regular Expression (PCRE) library~\cite{PCRELib} and
the PROSITE~\cite{PROSITE} protein database. Evaluation was conducted
on a 4~CPU (40~cores) shared-memory node of the Intel
Academic Program Manycore Testing Lab (Intel MTL), 
on the Intel AVX2 SDE simulator for 8-way 
fully vectorized SIMD execution, and on a 20-node (288~cores)
cluster on the Amazon EC2 computing cloud.
Obtained speedups are on the order of
$\BigO(1+\frac{\lvert P\rvert-1}{\lvert Q\rvert\cdot\gamma})$,
where $\lvert P\rvert$ denotes the number of processors or SIMD units, 
$ \lvert Q\rvert$ denotes the number of DFA states, and $0<\gamma\le 1$ 
represents a statically computed DFA property. For all 
observed cases, we found that $0.02<\gamma<0.47$. Actual speedups range 
from 2.3x to 38.8x for up to 512~DFA states for PCRE, and between 1.3x and 19.9x 
for up to 1288~DFA states for PROSITE on a 40-core MTL node. 
Speedups
on the EC2 computing cloud range from 5.0x to 65.8x for PCRE, and from 5.0x to 138.5x
for PROSITE. Speedups of our C-based DFA matcher over the Perl-based
ScanProsite scan tool~\cite{ScanProsite} range from 559.3x to 15079.7x on a 40-core 
MTL node. We show the scalability of our approach for input-sizes of up to \SI{10}{GB}.

\keywords{ DFA membership test \and parallel pattern matching \and parallel regular expression matching \and speculative parallelization \and multicores}
\end{abstract}

\section{Introduction}\label{sec:Introduction}
Locating a string within a larger text has applications
with text editing, compiler front-ends and web browsers,
scripting languages, file-search (grep), command-processors, databases,
Internet search engines, computer security, and DNA sequence analysis.
Regular expressions allow the specification of a potentially infinite
set of strings (or patterns) to search for.
A standard technique to perform regular expression matching is to
convert a regular expression to a DFA and run the DFA on
the input text. DFA-based regular expression matching has robust, linear
performance in the size of the input. However, practical DFA implementations
are inherently sequential as the matching result of an input character is 
dependent on 
the matching result of the previous characters.
Related to DFA matching on parallel architectures, considerable
research effort has been recently
spent~\cite{Luchaup2011,WangHL10,Holub:2009,Jones2009,Luchaup2009,ScarpazzaVP07}.

To speed up DFA matching on parallel architectures, we propose
to use speculation. With our method, the input string is divided
into chunks. Chunks are processed in parallel using sequential
DFA matching. For all but the first chunk, the starting state is unknown.

The core contribution of our method is to exploit structural properties
of DFAs to bound the set of initial states the DFA may assume
at the beginning of each chunk. Each chunk will be matched
for its reduced set of possible
initial states. By introducing such a limited amount
of redundant matching computation for all but the first chunk,
our DFA matching algorithm avoids speed-downs altogether (i.e.,
the speculation is failure-free~\cite{Specpar}). To achieve load-balancing,
the input string is partitioned
non-uniformly according to processor
capacity and work to be performed for each chunk.
These properties open up the opportunity for an entire new class
of parallel DFA matching algorithms. 
We present the time complexity of our matching algorithms,
and we conduct an extensive experimental evaluation
on SIMD, shared-memory multicore and cloud computing environments.
For experiments, we employ regular expressions from 
the PCRE Library~\cite{PCRELib} and from the
PROSITE protein pattern database~\cite{PROSITE}.
We show the scalability of our approach for input-sizes of
up to \SI{10}{GB}.

The paper is organized as follows. In
Section~\ref{sec:Background}, we introduce background material.
In Section~\ref{sec:Overview}, we discuss a motivating example
for our speculative DFA matching algorithms.
In Section~\ref{sec:Algorithms}, we introduce our algorithms and
their complexity with respect to speedup and costs.
Section~\ref{sec:Implementation} shows three implementations for
SIMD, shared-memory multicore and cloud-computing
environments. Section~\ref{sec:ExperimentalResults}
contains experimental results.
We discuss the related work in Section~\ref{sec:RelatedWork}
and draw our conclusions in
Section~\ref{sec:Conclusion}.

\begin{algorithm}[t]
    \SetKwInOut{Input}{Input}
    \SetKwInOut{Output}{Output}
    \SetKwData{STATE}{state}
    \SetKw{RETURN}{return}
    \Input{transition function~\TF, input string~$\STR$,
    start state~\StartState, set of final states~\FinalStates}
    \Output{$\truet$\/ if input is matched, $\falset$ otherwise}
    \STATE$\leftarrow\StartState$\\
    \For{$i\leftarrow0$ \KwTo $\lvert\STR\rvert-1$}{
        \STATE$\leftarrow\TF(\STATE,\STR[i])$
    }
    \If{\STATE$\in\FinalStates$}{
       \RETURN $\truet$\tcp*{input matched}
    }
    \RETURN $\falset$
    \caption{Sequential DFA matching}
    \label{algo:seqmatch}
\end{algorithm}

\section{Background\label{sec:Background}}
\subsection{Finite Automata\label{sec:FiniteAutomata}}
Let~$\Sigma$ denote a finite alphabet of characters and $\Sigma^*$
denote the set of all strings over $\Sigma$. Cardinality~$\vert\Sigma\vert$
denotes the number of characters in $\Sigma$.  A language over
$\Sigma$ is any subset of $\Sigma^*$. The symbol $\emptyset$ denotes
the empty language and the symbol $\lambda$ denotes the null string.
A finite automaton~$A$ is specified by a tuple 
$(\States, \Sigma, \TF, \StartState,F)$, 
where $\States$ is a finite set of states, $\Sigma$ is an input alphabet, 
$\TF: \States \times \Sigma \to 2^\States$ is a transition function,
$\StartState\in\States$ is the start state and $F \subseteq \States$ 
is a set of final states.  We define $A$ to be a DFA if $\TF$ is a 
transition function of $\States\times \Sigma \to \States$
and $\TF(\State, a)$ is a singleton set for any $\State \in \States$ and 
$a\in \Sigma$. 
Let $|\States|$ be the number of states in $\States$.
We extend transition function~$\TF$ to $\TFE$:
$\TFE(\State, ua)=p\Leftrightarrow\TFE(\State,u)=\State'$, 
$\TF(\State',a)=p$, $a\in\Sigma$, $u\in\Sigma^*$. 
We assume that a DFA has a unique error (or sink) state~$\ErrorState$.

An input string~$\STR$ over $\Sigma$ is accepted by DFA~$A$ if the DFA contains a labeled
path from $\StartState$ to a final state such that this path reads~$\STR$.
We call
this path an accepting path. Then, the language~$L(A)$ of $A$ is the
set of all strings spelled out by accepting paths in~$A$.

The DFA membership test determines whether a string is 
contained in the language of a DFA.
The DFA membership test is conducted by
computing~$\TFE(\State[0],\STR)$ and checking whether the result 
is a final state. Algorithm~\ref{algo:seqmatch} denotes the
sequential DFA matching algorithm.
As a notational convention, we denote the symbol
in the ~$i\,$th position of the input string by~$\STR[i]$.

\subsection{Amazon EC2 Infrastructure\label{sec:ec2background}}
The Amazon Elastic Computing Cloud (EC2) allows users to rent virtual
computing nodes on which to run applications. EC2 is very popular
among researchers and companies in need of instant and scalable computing
power. Amazon EC2 provides resizable compute capacity where users 
pay on an hourly basis for launched (i.e., up-and-running) virtual nodes.
By using virtualized resources, a computing cloud can serve a much broader
user base with the same set of physical resources.
Amazon EC2 virtual computing
nodes are virtual machines running on top of a 
variant of the Xen hypervisor.
To create a virtual machine, EC2 provides machine images which contain
a pre-configured
operating system plus application software. Users can adapt
machine images prior to deployment. The launch of a machine
image creates a so-called instance, which is a copy of the machine
image executing as a  virtual server in the cloud.
To provide a unit of measure for the compute capacities of instances,
Amazon introduced so-called EC2 Compute Units (CUs), which are claimed
to provide the equivalent CPU capacity of a 1.0--1.2~GHz 2007 Opteron or
2007 Xeon processor~\cite{EC2}. Because there exist many such CPU models
in the market, the exact processor capacity equivalent to one CU is not
entirely clear. Instance types are grouped into nine families, which
differ in their processor, I/O, memory and network capacities.
Instances are described in~\cite{EC2}; the instances employed in this
paper are outlined in Section~\ref{sec:Implementation}.
To create a cluster of EC2 instances, the user requires the launch of
one or more instances, for which the instance type and the machine image
must be specified.
The user can specify any machine image that has been registered with Amazon,
including Amazon's or the user's own images.
Once instances are booted, they are accessible
as computing nodes via ssh. A maximum of 20~instances can be used
concurrently, but this limit may be increased upon
user request~\cite{EC2limit}.  
\section{Overview}\label{sec:Overview}

\begin{figure}[t]
    \centering
    \begin{tabular}{ccc}
    \begin{tikzpicture}[->,>=stealth']
        \tikzstyle{every initial by arrow}=[->]
        \node[smallstate, initial] (S0) {\StartState};
        \node[smallfinalstate, right of = S0, node distance = 4em] 
            (S1){\State[1]};
        \node[smallstate, below of = S1,
              node distance = 4em, xshift=0mm, color=gray]
            (SE){\ErrorState};
        \path   (S0)    edge[loop above]    node{$a$}   ()
                (S0)    edge                node[above]{$b$}   (S1)
                (S1)    edge[loop above]    node{$c$}   ()
        ;
        \path   (S0)    edge[color=gray]    node[left,xshift=-0.8mm]{$c$}   (SE)
                (S1)    edge[color=gray]    node[right]{$a,b$} (SE)
                (SE)    edge[loop left, color=gray]     node{$a,b,c$} (SE)
        ;
    \end{tikzpicture}&
    \makebox{\hspace{2mm}}&
    \raisebox{16mm}{$\STR=aaaaaaabcccc$}\\
    {\small (a)}&
    \makebox{\hspace{2mm}}&
    {\small (b)}\\
    \end{tabular}
    \caption{Example DFA including the error state~\ErrorState~(a) and 12-symbol input string~(b).}
    \label{fig:motivDFA}
\end{figure}

The core idea behind our speculative DFA matching method is to divide the input
into several chunks and process them in parallel. As a motivating example we
consider the DFA depicted in
Fig.~\ref{fig:motivDFA}.
This DFA accepts strings which 
contain zero or more occurrences of the symbol~$a$, followed by exactly one occurrence of symbol~$b$,
followed by zero
or more occurrences of symbol~$c$. For the exposition of this motivating example
we have included the DFA's error state~\ErrorState\
and its adjacent transitions, which are depicted in gray. The DFA's alphabet is $\Sigma=\{a,b,c\}$, and
we consider the 12-symbol input string from Fig.~\ref{fig:motivDFA}(b).

Assuming that it takes on the order of one time-unit to process one character from the input string,
Algorithm~\ref{algo:seqmatch} will spend  12~time units for the sequential
membership test. 
This is denoted by the following notation, were a processor~\Processor[0]
matches the input string from Fig.~\ref{fig:motivDFA}(b). The
DFA is in state~\StartState\ initially.

\def\iskip{0.8mm}
\def\cskip{1.8mm}
\newlength{\StrH}
\settoheight{\StrH}{$b$}
\def\colwidth{14mm}

\begin{figure}[H]
 \centering%
 \begin{tabular}{l}
    \rule{\colwidth}{0pt}\\[-0.2mm]
    \fbox{$a\,a\,a\,a\,a\,a\,a\,b\,c\,c\,c\,c$}\\[\iskip]
    \Processor[0]: \StartState
 \end{tabular}
\caption{Notation: example input matched by
processor~\Processor[0]
starting in DFA state~\StartState\label{fig:input_string}.}
\end{figure}

To parallelize the membership test for three processors, the input 
string from Fig.~\ref{fig:input_string}
can be partitioned into three chunks of four symbols each, and assigned
to processors~\Processor[0], \Processor[1] and~\Processor[2] as follows.

\setlength{\intextsep}{2mm}
\begin{figure}[H]
 \centering%
 \begin{tabular}{lll}
    \rule{\colwidth}{0pt}&\rule{\colwidth}{0pt}&\rule{\colwidth}{0pt}\\[-2.2mm]
    \Chunk[0]&\Chunk[1]&\Chunk[2]\\
    \fbox{$a\,a\,a\,a$\rule{0pt}{\StrH}}&
    \fbox{$a\,a\,a\,b$}&
    \fbox{$c\,c\,c\,c$\rule{0pt}{\StrH}}\\[\iskip]
    \Processor[0]: \StartState&
    $\Processor[1]$: $\StartState, \State[1]$&
    $\Processor[2]$: $\StartState, \State[1]$
 \end{tabular}
 \caption{Dividing the input into equal-sized chunks.\label{fig:naive}}
\end{figure}
Because the DFA will initially
be in
start state~\StartState,
the first chunk~(\Chunk[0]) needs to be matched
for \StartState\ only. 
For all subsequent chunks, the DFA state at the beginning of the chunk is initially unknown.
Hence, we use speculative computations to match subsequent chunks for all states the DFA may assume.
We will discuss in Section~\ref{sec:Algorithms} how the amount of speculative computations
can be kept to a minimum.
For our motivating example, we assume the DFA to be in either
state~\State[0] or~\State[1] at the beginning of chunks~\Chunk[1]
and~\Chunk[2]. As depicted by the partition from Fig.~\ref{fig:naive},
processor~\Processor[0] will match chunk~\Chunk[0] for state~\StartState,
whereas processors~\Processor[1] and~\Processor[2] will match their assigned
chunks for both~\State[0] and~\State[1].
To match a chunk for a given state, a variation of the matching loop (lines~1--3)
of Algorithm~\ref{algo:seqmatch} is employed.

After processors~\Processor[0], \Processor[1] and~\Processor[2] have processed their assigned
chunks in parallel, the results from the individual chunks need to be combined to
derive the overall result of the matching computation.
Combining proceeds from the first to the last chunk by propagating the resulting DFA state
from the previous chunk as the initial state for the following chunk. According to
Fig.~\ref{fig:motivDFA}, the DFA from our motivating example will be in state~\State[0] after matching
chunk~\Chunk[0]. State~\State[0] is propagated as the initial state for chunk~\Chunk[1].
Processor~\Processor[1] has matched chunk~\Chunk[1] for both possible initial states, i.e.,
\State[0] and \State[1], from which we obtain that state~\State[0] at the beginning
of chunk~\Chunk[1] takes the DFA to state~\State[1] at the end of chunk~\Chunk[1].
Likewise, the matching result for chunk~\Chunk[2] is now applied to derive state~\State[1]
as the final DFA state.
  
To compute the speedup over sequential DFA matching, we note that
processor~\Processor[0] processes 4~input characters, whereas processors~\Processor[1]
and~\Processor[2] match the assigned chunks
twice, for a total of 8~characters per processor. The resulting speedup is thus
$\frac{12}{8}$
or~$1.5$ (Combining the matching results will induce slight additional costs on the
order of the number of chunks, as we will consider in Section~\ref{sec:Algorithms}). 

\begin{figure}[H]
 \centering
 \begin{tabular}{lll}
    \rule{21mm}{0pt}&\rule{\colwidth}{0pt}&\rule{\colwidth}{0pt}\\[-2.2mm]
    \Chunk[0]&\Chunk[1]&\Chunk[2]\\
    \fbox{$a\,a\,a\,a\,a\,a$\rule{0pt}{\StrH}}&
    \fbox{$a\,b\,c$}&
    \fbox{$c\,c\,c$\rule{0pt}{\StrH}}\\[\iskip]
    \Processor[0]: \StartState&
    $\Processor[1]$: $\StartState, \State[1]$&
    $\Processor[2]$: $\StartState, \State[1]$
 \end{tabular}
 \caption{Balanced input partition according to the number of
states matched by a processor.\label{fig:balanced}}
\end{figure}

An input partition that accounts for the work imbalance between the initial and all subsequent
chunks is depicted in Fig.~\ref{fig:balanced}.
Because processors~\Processor[1] and~\Processor[2] match chunks for two states each,
their chunks are only half the size of the chunk assigned to processor~\Processor[0].
All processors now process 6~characters each, resulting in a balanced load and a $2$x speedup
over sequential matching.

By considering the structure of DFAs, the amount of redundant, speculative
computation can be reduced. For the DFA in Fig.~\ref{fig:motivDFA},
we observe that for each alphabet character $x\in\Sigma=\{a,b,c\}$,
there is only one DFA state (except the error state~\ErrorState) with an incoming transition labeled~$x$. 
Thus, this particular DFA has the structural property that for any given character~$x\in\Sigma$,
the DFA state after matching character~$x$ is known {\em a-priory\/} to be either
the error state or the state with the incoming transition labeled~$x$.

\def\bsize{12.5pt}
\def\coff{60pt}
\def\ccoff{120pt}
\begin{figure}[H]
     \centering\vspace{3mm}
     \begin{tikzpicture}
        %
        %
        \draw[fill=lightgray,lightgray](24pt,0pt) rectangle (31pt,\bsize);
        \draw[](0pt,0pt) rectangle (31pt,\bsize);
        \node at (4.0pt, 4.8pt) {$a$};
        \node at (12.0pt, 4.8pt) {$a$};
        \node at (20.0pt, 4.8pt) {$a$};
        \node at (28pt, 4.8pt) {$a$};
        \node at (4.0pt,18pt) {$\Chunk[0]$};
        \node at (-2.8pt,-10pt) [right] {$\Processor[0]$: $\StartState$};
        %
        %
        \draw[fill=lightgray,lightgray]($(\coff,0pt)+(24pt,0pt)$) rectangle ($(\coff,0pt)+(31pt,\bsize)$);
        \draw[]($(\coff,0pt)+(0pt,0pt)$) rectangle ($(\coff,0pt)+(31pt,\bsize)$);
        \node at ($(\coff,0pt)+(4.0pt, 4.8pt)$) {$a$};
        \node at ($(\coff,0pt)+(12pt, 4.8pt)$) {$a$};
        \node at ($(\coff,0pt)+(20pt, 4.8pt)$) {$a$};
        \node at ($(\coff,0pt)+(27.5pt, 5.8pt)$) {$b$};
        \node at ($(\coff,0pt)+(4.0pt,18pt)$) {$\Chunk[1]$};
        \node at ($(\coff,0pt)+(-2.8pt,-10pt)$) [right] {$\Processor[1]$: $\StartState$};
        %
        %
        \draw[]($(\ccoff,0pt)+(0pt,0pt)$) rectangle ($(\ccoff,0pt)+(31pt,\bsize)$);
        \node at ($(\ccoff,0pt)+(4.0pt, 4.8pt)$) {$c$};
        \node at ($(\ccoff,0pt)+(12pt, 4.8pt)$) {$c$};
        \node at ($(\ccoff,0pt)+(20pt, 4.8pt)$) {$c$};
        \node at ($(\ccoff,0pt)+(28pt, 4.8pt)$) {$c$};
        \node at ($(\ccoff,0pt)+(4.0pt,18pt)$) {$\Chunk[2]$};
        \node at ($(\ccoff,0pt)+(-2.8pt,-10pt)$) [right] {$\Processor[2]$: $\State[1]$};
        %
        %
        %
        \draw [arrows={-latex'},thick]($(\coff,0pt)+(0.0pt,13pt)$)
              parabola bend ($(\coff,0pt)+(-18.0pt,19pt)$)
              ($(\coff,0pt)+(-32.0pt,13pt)$);
        \draw [arrows={-latex'},thick]($(\ccoff,0pt)+(0.0pt,13pt)$)
              parabola bend ($(\ccoff,0pt)+(-18.0pt,19pt)$)
              ($(\ccoff,0pt)+(-32.0pt,13pt)$);
     \end{tikzpicture}
     \caption{Balanced input partition considering one symbol
              reverse lookahead.\label{fig:imax}}
\end{figure}

A processor can exploit this structural DFA property by performing
a {\em reverse lookahead\/} to determine the last character from the previous
chunk. From this character the DFA state at the beginning of the current chunk
can be derived. In Fig.~\ref{fig:imax}, the reverse lookahead for our motivating
example is shown. 
Reverse lookahead characters are shaded in gray.
Character~$a$ is the lookahead character in chunk~\Chunk[0]; only DFA state~\StartState\ 
from Fig.~\ref{fig:motivDFA}
has an incoming transition labeled~$a$, thus the DFA
must be in state~\StartState\ 
at the beginning of chunk~\Chunk[1].
Likewise, the DFA must be in state~\State[1] at the beginning of chunk~\Chunk[2],
because state~\State[1] is the only DFA state with an incoming transition
labeled~$b$ (the lookahead character of chunk~\Chunk[1]).
Note that for these considerations the error state~\ErrorState\ can be ignored,
because once a DFA has reached the error state, it will stay there (e.g.,
see Fig.~\ref{fig:motivDFA}). Thus, to compute the DFA matching result
it is unnecessary to process the remaining input characters
once the error state has been reached.

Because now all processors have to match only a single state per chunk, the chunks
are of equal size. For three processors, we achieve a speedup of $3$x over sequential matching for the
motivating example.

It should be noted that in the general case the structure of DFAs will be less ideal, i.e., there
will be more than one state with incoming transitions labeled by a particular input character.
Consequently, each chunk will have to be matched for more than one DFA state.
We will develop a measure for the suitability of a DFA for this type of speculative parallelization
in Section~\ref{sec:Algorithms}. Our analysis of the time-complexity of this method
shows that for $\NrProcessors>1$, a speedup is achievable in general. This has been
confirmed by our experimental evaluations on
SIMD, shared-memory multicore, and the Amazon EC2 cloud-computing
environments.
We will discuss the trade-offs that come with multi-character reverse lookahead, and we will
incorporate inhomogeneous compute capacities of processors to resolve load imbalances.
This is essential to effectively utilize heterogeneous multicore architectures, and to overcome
the performance variability of nodes reported with cloud computing environments~\cite{EC2Perf,Armbrust2009}. 

\section{Speculative DFA Matching}\label{sec:Algorithms}
Our speculative DFA matching approach is a general method, which
allows a variety of algorithms that differ with respect to the
underlying hardware platform and the incorporation of
structural
DFA properties. 
We start this section with the formalization of our basic speculative
DFA matching example from Section~\ref{sec:Overview}.
We then present our approach to exploit structural DFA properties to speed up
parallel, speculative DFA matching. Section~\ref{sec:Implementation} contains variants tailored
for SIMD, shared memory multicores and cloud computing environments.
 
\subsection{The Basic Speculative DFA Matching Algorithm\label{sec:basicAlgo}}
Our parallel DFA membership test consists of the following four steps; the first
step is only required on platforms with processors of inhomogeneous performance.
\begin{enumerate}
\item Offline profiling to determine the DFA matching capacity of each participating processor, 
\item partitioning the input string into chunks such that the 
      utilization of the parallel architecture is maximized,
\item performing the matching process on chunks in parallel such that
      redundant computations are minimized, and
\item merging partial results across chunks to derive the overall result of the matching
      computation.
\end{enumerate}

\begin{figure}[t]
    \centering
    \begin{tabular}{c}
    \begin{tikzpicture}[->,>=stealth']
        \tikzstyle{every initial by arrow}=[->]
        \node[smallstate, initial] (S0){\StartState};
        \node[smallstate,
              right of = S0,
              yshift= 3em,
              node distance = 4em] (S1){\State[1]};
        \node[smallstate,
              right of = S0,
              yshift= -3em,
              node distance = 4em] (S2){\State[2]};
        \node[smallfinalstate,
              right of = S0,
              node distance = 8em] (S3){\State[3]};
        \path   (S0)    edge[bend left]    node [above left]{$a$}  (S1)
        (S0)    edge[bend right]           node[above right]{$b$}      (S2)
        (S1)    edge[bend left]     node[above right]{$b$}      (S3)
        (S2)    edge                node[xshift=-1.5mm]{$a$}    (S1)
        (S2)    edge[bend right]    node[below right]{$b$}      (S3)
        (S3)    edge[loop right]    node{$a$}                 ()
        ;
    \end{tikzpicture}\makebox{\hspace{6mm}}\\
    {\small(a)}\\\\
    $\STR=bababbababbababbaaabbababbbaabbaaaba$\\
    {\small(b)}\\
    \end{tabular}
    \caption{Example DFA~(a) and input string with 36~symbols~(b). }
    \label{fig:ExampleDFA}
\end{figure}

\begin{table}
    \centering
    \begin{tabular}{|c|c|c|c|c|}
    \hline
    Processor&$\Matched[k]$&$\Weight[k]$&\Length[0]$\cdot\Weight[k]$
    &Input character range\\
    \hline\hline
    \Processor[0]&$50$&$1.5$&$28.8$&$0$--$27$\\
    \hline
    \Processor[1]&$25$&$0.75$&$3.6$&$28$--$31$\\
    \hline
    \Processor[2]&$25$&$0.75$&$3.6$&$32$--$35$\\
    \hline
    \end{tabular}
    \caption{Computation of chunk sizes for Fig.~\ref{fig:ExampleDFA} and three processors of non-uniform
             processing capacities.}
    \label{tab:ExampleChunkComp}
\end{table}

{\bf Offline Profiling:}
For environments with inhomogeneous compute capacities, our 
offline profiling step determines the DFA matching
capacities of all participating processors. This information is required to partition
work equally among processors and thus balance the load.
With
heterogeneous multicore hardware architectures such as
the Cell~BE~\cite{CellAccelerators}, offline profiling must 
be conducted only once to determine
the performance of all types of processor cores provided by the architecture.
With cloud computing environments such as the Amazon EC2 cloud~\cite{EC2}, users only
have limited control on the allocation of cloud computing nodes. Moreover,
the performance of cloud computing nodes has been found
to differ significantly, which is by a large extent
attributed to variations in the employed hardware platforms~\cite{EC2Perf,Armbrust2009}. 
To compensate for the performance variations between cloud computing nodes,
offline profiling will be conducted at cluster startup time. Profiling cluster nodes
in parallel takes only on the order of milliseconds, which makes the overhead from
profiling negligible compared to the substantial cluster startup times
on EC2 (on the order of minutes
by our own experience and
also reported in~\cite{Ostermann09}).

To account for performance variations,
we introduce
a weight factor~\Weight[k], which denotes the processor capacity
of a processor~\Processor[k], normalized by the average
processor capacity of the system. 
On each processor~\Processor[k],
our profiler performs several partial sequential DFA
matching runs for a predetermined
number of input symbols on a given benchmark DFA. From the median of the obtained execution
times, we compute the number of symbols~$\Matched[k]$ matched by processor~\Processor[k] per microsecond.
The processor's
weight factor~$w_{k}$ is then
computed as
\begin{equation}
    \Weight[k]=\Matched[k]\cdot
    \left(\frac{1}{\NrProcessors}
    \cdot\sum_{0\le i <\NrProcessors}\Matched[i]\right)^{-1}
    \negthickspace\negthickspace\!.
    \label{eq:Weights}
\end{equation}
Columns~``$\Matched[k]$''
and ``$\Weight[k]$'' of Table~\ref{tab:ExampleChunkComp} contain example matching capacities
and corresponding weights for a system of three processors. 
We will apply processor weights to partition
the input string into chunks as follows.

{\bf Input Partitioning:}\label{sec:BasicMatching}
We observed already with our motivating example
from Fig.~\ref{fig:naive} that partitioning the input into
equal-sized chunks will result in load-imbalance: because
for the first chunk the initial DFA state is known to be~\StartState,
the first chunk needs to be matched only once.  
All other chunks must be matched for all possible initial states of the 
chunk , i.e., \NrStates~times,
in the worst case. In what follows, we will derive a partition of the input \STR\
into \NrProcessors~chunks, assuming that all except the first chunk need
to be matched for \NrStates\ states. In Section~\ref{sec:OptMatch}, we will exploit
structural DFA properties to reduce the number of states to be matched per chunk.

Intuitively, because processor~\Processor[0] has to match chunk~\Chunk[0] only once,
it can process a larger portion of the input~\STR\ than the processors assigned to subsequent chunks.
(This was observed already in Fig.~\ref{fig:balanced}, where chunk sizes were adjusted such that all
processors processed the same number of characters from the input.)
The objective of our optimization is to determine chunk sizes in such
a way that the processing times for all chunks are equal.
The purpose of the following equations is to compute a partition
of the input into chunks~\Chunk[i], $0\le i <\NrProcessors$,
where chunk~\Chunk[i] is a sequence of symbols from the input allocated 
to processor~\Processor[i].

Let \Length[i] denote the length of chunk~\Chunk[i] 
when $0\le i <\NrProcessors$, and $n$ be the
length of the input~\STR. Let us further assume that matching
of a character from the input takes constant time.
Processor~\Processor[0] matches chunk~\Chunk[0] from starting 
state~\StartState.
All other chunks need to be matched for all possible initial states.
To keep
work among processors balanced, chunk~\Chunk[0] must 
be \NrStates\ times longer than the other chunks, i.e., it must hold that
\begin{equation}
    \Length[i]=\frac{\Length[0]}{\NrStates},\text{ for } 
        1\le i <\NrProcessors.
    \label{eq:LI}
\end{equation}
The lengths of all chunks must add up to $n$, namely
\begin{equation}
    \sum_{0\le i<\NrProcessors}\Length[i]=n.
    \label{eq:NonWeighted}
\end{equation}
If processors have non-uniform processing capacity, we
incorporate weight factors from Eq.~\eqref{eq:Weights}, such that
weighted chunk sizes must add up to~$n$.
\begin{equation}
    \sum_{0\le i<\NrProcessors}\Length[i]\Weight[i]=n.
    \label{eq:WeightedL}
\end{equation}
Finally we solve the unknown~\Length[0] by substituting corresponding
parts of Eq.~\eqref{eq:LI} and~\eqref{eq:Weights} in Eq.~\eqref{eq:WeightedL},
i.e.,
\begin{equation}
    \Length[0]=\frac{n\cdot\NrStates}
                {\Weight[0]\cdot\NrStates
                  +\sum_{1\le i<\NrProcessors}\Weight[i]}.
    \label{eq:LZero}
\end{equation}
The start and end positions
for each chunk~\Chunk[k], $0\le k <\NrProcessors$, are computed by
the following equations. (Note that for $k=1$, the range for the sum over
$\Length[0]\Weight[i]$ is~$0$.) 

\begin{equation}
    \StartPos(\Chunk[k])=
    \begin{cases}
        0, & \text{for } k=0,\\[2.2mm]
        \left\lfloor\Length[0]\Weight[0] + \frac{1}{\NrStates}
        \sum_{1\le i<k} \Length[0]\Weight[i]\right\rfloor ~~~~~~&\text{otherwise}~~~~~~~~~~~\\
    \end{cases}
\label{eq:StartPosition}
\end{equation}

\vspace{1.2mm}

\begin{equation}
    \EndPos(\Chunk[k])=
    \begin{cases}
        n-1, &\text{for } k=|P|-1,\\[2.2mm]
        \lfloor\Length[0]\Weight[0] + \frac{1}{\NrStates}
        \sum_{1\le i\le k} \Length[0]\Weight[i]\rfloor -1, 
        \:\:\:&\text{otherwise}\\
    \end{cases}
\label{eq:EndPosition}
\end{equation}

An example DFA and an input string of length~$n=36$ are presented in
Fig.~\ref{fig:ExampleDFA}. The corresponding chunk sizes for three
processors with different processing capacities
are depicted in Table~\ref{tab:ExampleChunkComp}.
We observe by Eq.~\eqref{eq:LZero} that the length~\Length[0] of chunk~\Chunk[0] is $19.2$~characters,
and the weighted length according to processor weight~\Weight[0] is 28.8~characters.
From Eq.~\eqref{eq:LI} we observe that the remaining chunks are four times shorter than chunk~\Chunk[0],
because they have to be matched for $\NrStates=4$ states.
The weighted lengths of chunks~\Chunk[1] and~\Chunk[2] are thus $3.6$ characters each.
The rightmost column of Table~\ref{tab:ExampleChunkComp} depicts the character ranges of the input
as they have been assigned to each chunk.

{\bf Matching of Chunks:}
Algorithm~\ref{algo:basicmatch} depicts our basic speculative DFA
matching procedure.
We employ the notation introduced in~\cite{Holub:2009}
to denote a mapping of {\em possible
initial states\/} to {\em possible last active states\/} of a chunk.
This mapping is required to store a chunk's matching results for all possible
initial states. After matching chunks in parallel, the computed mappings will be used
to derive the overall DFA matching result.
Formally, this mapping is defined as a
vector~$$\StateMap[i]=[l_0, l_1,\ldots,l_{\NrStates-1}],$$
where~$0\le i <\NrProcessors$ and $l_j \in \States$ for all 
$0\le j <\NrStates$. Let element~$l_j$ of $\StateMap[i]$ denote the last 
active state, assuming that processor~\Processor[i] starts in 
state~\State[j] and processes the DFA membership test on 
chunk~\Chunk[i], i.e.,
$\TFE(\State[j],\Chunk[i])=l_j$. 

As an example, we consider chunk~\Chunk[2] from Fig.~\ref{fig:naive} and
the DFA from Fig.~\ref{fig:motivDFA}.
Chunk~\Chunk[2] will be matched for the possible initial states~\StartState\ and~\State[1], with
the resulting last active states~\ErrorState\ and~\State[1] and the result
vector~$\StateMap[2]=[\ErrorState,\State[1]]$. The meaning of vector~$\StateMap[2]$ is that
if the DFA assumes state~\StartState\ at the beginning of chunk~\Chunk[2], then it will
be in state~\ErrorState\ after matching chunk~\Chunk[2]. If the DFA assumes
 state~\State[1]\ at the beginning of chunk~\Chunk[2], then it will
be in state~\State[1]\ after matching chunk~\Chunk[2]. 

Our basic speculative DFA matching procedure employs Eqs.~\eqref{eq:StartPosition}
and~\eqref{eq:EndPosition}
to derive the start and end position of each chunk (lines 4--5
of Algorithm~\ref{algo:basicmatch}). The algorithm
distinguishes between the first input
chunk (lines~6--8) and all subsequent chunks (lines 9--12).
According to our partitioning scheme, chunk~\Chunk[0] is only matched
for the start state~\StartState\ (lines 7--8). For all subsequent chunks~\Chunk[i],
all possible DFA states are matched and stored in vector~$\StateMap[i]$. 
Chunk sizes are chosen according to processor weights and the number of states
to be matched with each chunk. The goal of this partitioning is to
load-balance the DFA matching to
effectively utilize the underlying parallel hardware platform.
We will discuss in Section~\ref{sec:TimeComplexity} that our partitioning
scheme makes this speculation failure--free.
The output of Algorithm~\ref{algo:basicmatch} is the set of 
vectors~$\StateMap[i]$,
where each vector describes the possible last states
according to the possible initial states of a given chunk.

\begin{algorithm}[htb]
    \SetKwInOut{Input}{Input}
    \SetKwInOut{Output}{Output}
    \SetKwData{START}{Start}
    \SetKwData{END}{End}
    \SetKwComment{EOLC}{//}{}

    \Input{\TF, \States,
           $\Sigma$,
           $\Processors$,
           $\STR=\Chunk[0]\Chunk[1]\ldots\Chunk[\NrProcessors-1]$
          }
    \Output{vector~$\StateMap[i]$ for each chunk~\Chunk[i]}
    \ForPar{$i\leftarrow0$ \KwTo $\NrProcessors-1$}{
        \For{$j\leftarrow0$ \KwTo $\NrStates-1$}{
            $\StateMap[i][j]\leftarrow j$
            \tcp*{initialize vector $\StateMap[i]$}
        }
        $\START\leftarrow\StartPos(\Chunk[i])$\\
        $\END\leftarrow\EndPos(\Chunk[i])$\\
        \eIf(\tcp*[f]{chunk~\Chunk[0]}){$i=0$}{
            \For{$k\leftarrow \START$ \KwTo $\END$}{
                $\StateMap[0][0]
                 \leftarrow\TF(\StateMap[0][0], \STR[k])$
            }
        }(\tcp*[f]{chunks~$\Chunk[1]\ldots\Chunk[{\NrProcessors-1}]$})
        {
            \ForEach{$j \in\States$}{
                \For{$k\leftarrow\START$ \KwTo $\END$}{
                    $\StateMap[i][j]
                    \leftarrow\TF(\StateMap[i][j], \STR[k])$
                }
            }
        }
    }
    \caption{Basic speculative DFA matching}
    \label{algo:basicmatch}
\end{algorithm}

{\bf Merging of Partial Results:}
After matching chunks in parallel, each
processor~\Processor[i]\ has constructed a mapping~\StateMap[i]\ of
possible initial states to last active states. To finish the DFA run,
the partial results computed for chunks~\Chunk[i]
need to be combined
to determine the last active state for the DFA-run over the whole
input string $\STR=\Chunk[0]\Chunk[1]\ldots\Chunk[\NrProcessors-1]$.
Chunk~\Chunk[0] is the only chunk for which we know the initial
state of the automaton, i.e., \StartState. We use this information
to apply the mappings~\StateMap[i] sequentially to derive the last
active state as follows~(it should be noted that index~$0$ of the $\StateMap[][\ldots]$ mapping
is the index of the start state~\StartState):
\begin{equation}
    \mbox{last active state}=
     \StateMap[\NrProcessors-1][
     \StateMap[\NrProcessors-2][
     \ldots\StateMap[0][0]\ldots]].
    \label{eq:SeqReduction}
\end{equation}
It has been shown in~\cite{Holub:2009} how a binary 
reduction~(see~\cite{Snyder}) can be used to parallelize this computation.
A binary reduction uses a combining operation
on two maps~$\StateMap[i]$
and~$\StateMap[j]$ to derive the combined
map~$\StateMap[i,j]$
as depicted in Eq.~\eqref{eq:MapReduction}.
\begin{equation}
\StateMap[i,j]=
\left[
\begin{array}{c}%
\StateMap[j][\StateMap[i][{0}]]\\
\StateMap[j][\StateMap[i][{1}]]\\
\vdots\\
\StateMap[j][\StateMap[i][{\NrStates-1}]]
\end{array}
\right]
\label{eq:MapReduction}
\end{equation}
The reduction step above can be performed repeatedly in parallel to
combine maps until we finally arrive at the
map~$\StateMap[0,\NrProcessors-1]$ which represents the overall effect of
a DFA. In particular, the value~$\StateMap[{0},{\NrProcessors-1}][0]$
will be the last active state of a DFA's run on the input~\STR.

The work in~\cite{Holub:2009} does not provide an evaluation of the 
relative merits
of sequential vs.~parallel merging of $\StateMap$-vectors. In particular,
the details of the employed parallel reduction algorithm are not specified.
We conducted experiments on a 40-core shared memory node of the Intel 
MTL using a binary tree for the parallel reduction to find
that the computation associated with the merging of $\StateMap$-vectors 
is not large enough to justify the overhead of a parallel reduction. Especially 
the overhead from the synchronization required between each of 
the $\BigO(\log_2(\NrProcessors))$ reduction steps is costly. 

Moreover, the overhead becomes significant if communication 
cost between nodes are introduced such as with cloud computers. We describe our
findings on the overheads of intra-node and inter-node communication
with the EC2 computing cloud in detail in
Section~\ref{sec:Implementation}. 
Section~\ref{sec:Implementation} introduces a new $\StateMap$-vector
merging technique
to cope with the overhead on cloud computers.

In short, we applied the sequential merging from 
Eq.~\eqref{eq:SeqReduction} with shared-memory multicore architectures
and a new hierarchical merging technique for cloud computing architectures, 
which will be explained in Section~\ref{sec:Implementation}.

\subsection{Optimizations Based on Structural DFA Properties\label{sec:OptMatch}}
The amount of work associated with a given chunk is determined by (1)~the length of the chunk, and
(2)~the number of DFA states for which the chunk needs to be matched.
In the following, we will distinguish between the initial chunk~\Chunk[0], and {\em subsequent\/}
chunks~\Chunk[i], $i>0$.
Before matching the initial chunk~\Chunk[0], the DFA will be in the starting state~\StartState, thus
chunk~\Chunk[0] only needs to be matched for~\StartState. Prior to the matching of subsequent chunks,
the DFA may assume any state in the general case,
thus subsequent chunks need to be matched~\NrStates\ times (see, e.g., the
motivating example in Fig.~\ref{fig:naive}). 
In this section we will exploit structural properties of DFAs to deduce a potentially smaller number~$\MaxNrIStates\le\NrStates$
of states which is the upper bound of initial states for all subsequent chunks.

The best case, i.e., $\MaxNrIStates=1$, has already been observed with our motivating example DFA from Fig.~\ref{fig:motivDFA}.
For each character ~$\sigma\in\Sigma$ of this DFA, it holds that there is only one state targeted by a transition labeled~$\sigma$.
Irrespective of the particular input character~$\sigma$, the DFA can only assume a single state after matching character~$\sigma$.
(As mentioned previously, for these considerations we may safely disregard the error state~\ErrorState, because
from the error state no other state is reachable; thus, a DFA that reached the error state will stay there.)
If there is only one possible DFA state after matching an input character, it follows that the DFA can only
be in one state after matching the last character prior to each subsequent chunk.
Thus the DFA can only be in one possible state at the beginning of each subsequent chunk,
and we have~$\MaxNrIStates=1$.

In the general case, values for \MaxNrIStates\ can range between~$1$
and~\NrStates. In the remainder of this section, we will investigate how to deduce this 
\MaxNrIStates\ value for a particular DFA, and how this information
can be incorporated with our speculative DFA matching algorithm. We will consider
real-world DFAs from PCRE and PROSITE to find that for all considered DFAs it holds
that~$\MaxNrIStates<\NrStates$, and that this property can be used to improve DFA
matching performance.
We have already observed with the input partition in Fig.~\ref{fig:imax} that reducing the number
of initial states of subsequent chunks enables us to increase the sizes of subsequent chunks. 
Larger subsequent chunks will reduce the size of the initial chunk~\Chunk[0] in turn.
Because we adjust chunk sizes such that all chunks will be processed in the same amount of time, reducing
the size of the initial chunk~\Chunk[0] will reduce the overall execution time of the matching process.
The overarching reason for this performance improvement is that the reduction of potential initial
states reduces the total number of symbols that have to be matched per chunk.

This can be formalized as follows.
Let~\MaxNrIStates\ denote the maximum number of
possible initial states that the DFA may assume at the start over
all subsequent chunks. This maximum can be
different for each chunk, depending on the last character of the preceding chunk.
We assume that for \MaxNrIStates\ we pick the 
maximum value out of all possible sets of initial states 
over all chunks.
If $\MaxNrIStates<\NrStates$, then the length~\Length[0]
of chunk~\Chunk[0] reduces by Eq.~\eqref{eq:LZero}:

\begin{equation}
\begin{split}
\Length[0]&=\frac{n\cdot\MaxNrIStates}
                {\Weight[0]\cdot\MaxNrIStates
                  +\sum_{1\le i<\NrProcessors}\Weight[i]}
           <
            \frac{n\cdot\NrStates}
                {\Weight[0]\cdot\NrStates
                  +\sum_{1\le i<\NrProcessors}\Weight[i]}.
\end{split}
\label{eq:SpeedupIStates}
\end{equation}

To deduce the maximum value for \MaxNrIStates, we  
eliminate states that can never be the initial state for a given
chunk. For each character~$\Sym\in\Sigma$, a DFA will contain a number of states that
have an incoming transition labeled~$\Sym$. Thus, if the last character of
a chunk's preceding chunk is $\Sym$, then only the states with an incoming transition
labeled~$\Sym$ need to be matched. We call the last input character of a chunk's preceding
chunk the {\em reverse lookahead symbol}. The number of states to be matched for
a reverse lookahead symbol~$\Sym\in\Sigma$ is a static property of a DFA.
It will range between~$1$ and~\NrStates.
The maximum number of states to be matched over any reverse lookahead symbol constitutes 
an upper bound on \MaxNrIStates, i.e., an upper bound on the number of states to be matched for any
subsequent chunk. Because \MaxNrIStates\ is a static DFA property, we can use it
to partition the input into chunks according to Eq.~\eqref{eq:SpeedupIStates}. At run-time,
a processor will use the reverse lookahead symbol
to determine the initial states to be matched for its assigned chunk.

Given lookahead symbol~$\Sym$, we define the set of initial states~$\IStates_\Sym$
as the set of all states that have an incoming transition labeled~$\Sym$.

\begin{equation}
    \IStates_{\negthinspace\Sym}=\{s: \TF(x, \Sym)=s\},
    \:\:\:
    \forall s,x\in\States.
    \label{eq:istates}
\end{equation}

If symbol~$\Sym$ is the reverse lookahead symbol of chunk~\Chunk[i],
then the set of possible initial states for chunk~\Chunk[i]
is~$\IStates_\sigma$.
We compute the set of possible initial states for all symbols from
the DFA's alphabet~$\Sigma$ and set~$\MaxNrIStates$
to the maximum cardinality among those sets, i.e.,
\begin{equation} 
    \MaxNrIStates=\max_{\Sym\in\Sigma}\left(\lvert\IStates_\Sym\rvert\right).
    \label{eq:MaxIstates}
\end{equation}

As an example, consider the DFA from 
Fig.~\ref{fig:ExampleDFA}(a).
Fig.~\ref{fig:InputString} shows the input string partitioned
for three processors of equal capacity, i.e.,
$\Weight[0]=\Weight[1]=\Weight[2]=1$.
The reverse lookahead symbols are depicted in gray.
No reverse lookahead is required for chunk~\Chunk[0], which will
be matched from the DFA's start state~\StartState.
Because the reverse lookahead symbol~$\sigma$
of chunk~\Chunk[1] is an `$a$', upon
matching of chunk~\Chunk[1] the DFA can only be in a state that
has an incoming transition labeled~`$a$'. Likewise, because 
the reverse lookahead symbol
of chunk~\Chunk[2] is `$b$', the DFA can only be in a state that
has an incoming transition labeled~`$b$' upon
matching of chunk~\Chunk[2].
We get $\IStates[a]=\{\State[1],\State[3]\}$, 
$\IStates[b]=\{\State[2],\State[3]\}$, and~$\MaxNrIStates=2$.
Inserting $n=36$, $\MaxNrIStates=2$, $\NrStates=4$ and
$\Weight[0]=\Weight[1]=\Weight[2]=1$
in Eq.~\eqref{eq:SpeedupIStates} yields $\Length[0]=18<24$ and
a speedup of $\frac{24}{18}=1.\dot{3}$ over the non-optimized
matching procedure.

\def\PSone{15.0pt}
\def\PStwo{14.0pt}
\def\Shift{1.08pt}
\def\d{-\Shift}
\def\u{+\Shift}
\def\n{+0pt}
\def\bsize{12.5pt}
\def\coff{60pt}
\def\ccoff{120pt}
\begin{figure}[htb]
\centering
\normalsize
     \begin{tikzpicture}
        \draw[fill=lightgray,lightgray] (56pt,0pt) rectangle (61.5pt,\bsize);
        \draw[](-26pt,0pt) rectangle (61.5pt,\bsize);
        \draw[fill=lightgray,lightgray] (120pt,0pt) rectangle (125.5pt,\bsize);
        \draw[](79.5pt,0pt) rectangle (125.5pt,\bsize);
        \draw[](143.5pt,0pt) rectangle (190.1pt,\bsize);
        \node (C1) at (4.0pt,  5.8pt) {$\STR:\:\:\:bababbababbababbaa$};
        \node[right = \PSone of C1.east,yshift=\n] (C2) {$abbababbb$};
        \node[right = \PStwo of C2.east,yshift=\n] (C3) {$aabbaaaba$};
        \node[above = 0mm of C1.north west,xshift=12mm, yshift=-.51mm] {$\Chunk[0]$};
        \node[above = 0mm of C2.north west,xshift=2.4mm, yshift=-.51mm] {$\Chunk[1]$};
        \node[above = 0mm of C3.north west,xshift=2.4mm, yshift=-.51mm] {$\Chunk[2]$};
        \node[below = 0mm of C1.south west,xshift=15.1mm, yshift=-.71mm] {$\Processor[0]$: $\StartState$};
        \node[below = 0mm of C2,xshift=-1.62mm, yshift=-.71mm] {$\Processor[1]$: $\State[1],\State[3]$};
        \node[below = 0mm of C3,xshift=-1.62mm, yshift=-.71mm] {$\Processor[2]$: $\State[2],\State[3]$};
        \draw [arrows={-latex'},thick](79.5pt,\bsize)
              parabola bend (68pt,\bsize+5pt)
              (58.5pt,\bsize);
        \draw [arrows={-latex'},thick](143.5pt,\bsize)
              parabola bend (132pt,\bsize+5pt)
              (122.5pt,\bsize);
     \end{tikzpicture}\vspace{-2mm}%
    \caption{Partitioned input string with reverse lookahead symbols and set of initial states to be matched for each chunk.
    \label{fig:InputString}}
\end{figure}

\begin{algorithm}[b]
    \SetKwInOut{Input}{Input}
    \SetKwInOut{Output}{Output}
    \SetKwData{START}{Start}
    \SetKwData{END}{End}
    \SetKwComment{EOLC}{//}{}
    \Input{\TF, \States,
           $\Sigma$,
           $\Processors$,
           $\STR=\Chunk[0]\Chunk[1]\ldots\Chunk[\NrProcessors-1]$,
           \StartState}
     \Output{vector~$\StateMap_{\Processor[i]}$ 
         for each chunk~\Chunk[i]}
     \ForEach{$\Sym_i\in\Sigma$}{
        $\IStates_{\Sym_i}\leftarrow\emptyset$\\
        \ForEach{$s\in\States$}{
           $\State[\text{target}]\leftarrow\TF(s,\Sym_i)$\\
           \If{$\State[\text{target}]\ne\ErrorState$}{
              $\IStates_{\Sym_i}\leftarrow\IStates_{\Sym_i}\cup\State[\text{target}]$
           }
        }
    }
    $\MaxNrIStates\leftarrow
         \max(\IStates_{\Sym_0},\ldots,\IStates_{\Sym_{|\Sigma|-1}})$\\
    \ForPar{$i\leftarrow0$ \KwTo $\NrProcessors-1$}{
        \For{$j\leftarrow0$ \KwTo $\NrStates-1$}{
            $\StateMap[i][j]\leftarrow j$
            \tcp*[f]{initialize vector $\StateMap[i]$}
        }
        $\START\leftarrow\StartPos(\Chunk[i], \MaxNrIStates)$\\
        $\END\leftarrow\EndPos(\Chunk[i], \MaxNrIStates)$\\
        \eIf(\tcp*[f]{chunk~\Chunk[0]}){$i=0$}{
           \For{$k\leftarrow \START$ \KwTo $\END$}{
                $\StateMap[0][0]
                 \leftarrow\TF(\StateMap[0][0], \STR[k])$
           }
        }(\tcp*[f]{chunks~$\Chunk[1]\ldots\Chunk[{\NrProcessors-1}]$})
        {
            \ForEach{$j \in\IStates[{\Chunk[i]}]$}{
                \For{$k\leftarrow\START$ \KwTo $\END$}{
                    $\StateMap[i][j]
                    \leftarrow\TF(\StateMap[i][j], \STR[k])$
                }
            }
        }
    }
    \caption{DFA matching applying initial state sets}
    \label{algo:MatchingInitialStates}
\end{algorithm}

Algorithm~\ref{algo:MatchingInitialStates} applies initial state sets
with the DFA matching procedure. Lines~1--7 compute
initial state sets~$\IStates_\Sym$ from Eq.~\eqref{eq:istates} and
\MaxNrIStates\ from Eq.~\eqref{eq:MaxIstates}. Unlike 
Algorithm~\ref{algo:basicmatch}, the partitioning is now based on the
maximum number of possible initial states, \MaxNrIStates, instead
of~\NrStates. The~$\StartPos$ and~$\EndPos$ functions
that compute the start and end position of each chunk now
receive~\MaxNrIStates\ as the second argument (lines~11--12 in
Algorithm~\ref{algo:MatchingInitialStates}). We updated
Eqs.~\eqref{eq:StartPosition} and~\eqref{eq:EndPosition} to
include an additional parameter to pass~\MaxNrIStates.
In
Eqs.~\eqref{eq:LZero}--\eqref{eq:EndPosition}, instead of
\NrStates\ we then use the provided argument value to partition the
input string and to compute the start and end position of each chunk.

Because the maximum number of initial states~\MaxNrIStates\ is a static property of a DFA,
it can be computed off-line. The overhead to compute~\MaxNrIStates\ can thus be avoided with
DFAs that are matched multiple times. For example, with 
protein patterns maintained in databases, corresponding DFAs can be expected to be matched on several 
DNA sequences.
However, with all our experiments, we computed~\MaxNrIStates\ online for every matching run (as stated in
Algorithm~\ref{algo:MatchingInitialStates}), to account for the general case were
a DFA is matched only once.

Another possible optimization of Algorithm~\ref{algo:MatchingInitialStates}
concerns the distribution of cardinalities of initial state sets~$\IStates_\Sym$. 
If the maximum value~\MaxNrIStates\ is significantly larger than the average, then
it is desirable to divide the input at boundaries with reverse lookahead symbols
that have a small initial state set.
This would further decrease the number of possible
initial states of subsequent chunks.
However, searching the input for the occurrence of particular characters 
constitutes an effort similar to the matching process itself.
Moreover, relying on statistical properties of the input string (i.e., the occurrence
of particular characters in the input) may violate the failure-freedom of our speculation: if a reverse lookahead
symbol with a low set of initial states cannot be found, then additional states need to be matched, resulting
in a possible speed-down.
In contrast,
by considering \MaxNrIStates\ states,
our optimization always shows equal or better
performance than the non-optimized matching procedure 
that has to match all states in~\States.

\subsection{Multiple Reverse Lookahead Symbols}
As discussed in the previous section, a smaller \MaxNrIStates\ constant
will decrease the number of symbols to be matched per chunk, thereby increasing
DFA matching performance. We can potentially decrease
the number of possible initial states, if we employ additional reverse lookahead symbols
with each chunk.
Given a string of reverse lookahead symbols~$\Sym_1\ldots\Sym_k$, $k\ge 1$.
We number the reverse lookahead symbols in the order they are matched by the DFA,
which is the reverse order of the lookahead itself.
The set
of initial states~$\IStates_{\Sym_1\ldots\Sym_k}$ constitutes the set of
all states that are the target of a path through the DFA labeled
by a string with postfix~$\Sym_1\ldots\Sym_k$, i.e.,
\begin{equation}
    \IStates_{\negthinspace\Sym_1\ldots\Sym_k}=\{s: \TFE(x, \Sym_1\ldots\Sym_k)=s\},
    \:\:\:
    \forall s,x\in\States.
    \label{eq:istatesm}
\end{equation}
Let~$\MaxNrIStates_{,r}$ be the maximum number of possible initial states
when using $r$~reverse lookahead symbols (in particular, $\MaxNrIStates_{,1}=\MaxNrIStates$).
Algorithm~\ref{algo:TwoSymISet} shows for a reverse lookahead of two characters  how
to compute initial state
sets~$\IStates_{\negthinspace\Sym_1,\Sym_2}$ and constant~$\MaxNrIStates_{,2}$.
The time complexity for computing~$\MaxNrIStates_{,r}$
is $\BigO\left(\lvert\Sigma\rvert^r\cdot\NrStates+\NrStates\right)$, i.e., the algorithm is exponential
in the number~$r$ of reverse lookahead symbols.
\begin{algorithm}[tb]
    \SetKwInOut{Input}{Input}
    \SetKwInOut{Output}{Output}
    \SetKwData{START}{Start}
    \SetKwData{END}{End}
    \SetKwComment{EOLC}{//}{}

    \Input{\TF, \States,
           $\Sigma$
          }
    \Output{$\IStates[\Sym_1\Sym_2],\MaxNrIStates_{,2}$}
    \ForEach{$\Sym_1\in\Sigma$}{
       \ForEach{$\Sym_2\in\Sigma$}{
          $\IStates[\Sym_1\Sym_2]\leftarrow\emptyset$\\
          \ForEach{$\State\in\States$}{
              $\IStates[\Sym_1\Sym_2]\leftarrow
               \IStates[\Sym_1\Sym_2]
               \cup
               \left(
                  \TF(\TF(\State,\Sym_1),\Sym_2)\setminus \{\ErrorState\}
               \right)$ 
          }
       }
    }
    $\MaxNrIStates_{,2}=\max_{\Sym_1,\Sym_2\in\Sigma}\left(\lvert\IStates[\Sym_1\Sym_2]\rvert\right)$
    \caption{Initial state set $\IStates_{\Sym_1\Sym_2}$ and $\MaxNrIStates_{,2}$
     computation for 2-character reverse lookahead}
    \label{algo:TwoSymISet}
\end{algorithm}

The following lemma establishes that
when increasing the amount of reverse lookahead symbols, the maximum number of possible initial
states~$\MaxNrIStates_{,r}$ of a DFA is bounded above by~\MaxNrIStates.

\begin{lemma}
Given a DFA, it holds that
$\MaxNrIStates=\MaxNrIStates_{,1}\ge\MaxNrIStates_{,2}\ge\ldots\ge\MaxNrIStates_{,\omega}$,
where $\omega$ denotes the length of the longest accepting path through the DFA.
\end{lemma}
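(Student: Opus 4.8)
The plan is to show that each step of the inequality chain holds, i.e.\ that $\MaxNrIStates_{,r}\ge\MaxNrIStates_{,r+1}$ for every $1\le r<\omega$, and the full chain then follows by transitivity. The key observation is that the defining set in Eq.~\eqref{eq:istatesm} shrinks (or stays the same) as we prepend one more lookahead symbol, because any state reachable by a path ending in the longer postfix $\Sym_0\Sym_1\ldots\Sym_r$ is in particular reachable by a path ending in the shorter postfix $\Sym_1\ldots\Sym_r$. Concretely, I would fix an arbitrary length-$(r+1)$ string $\Sym_0\Sym_1\ldots\Sym_r\in\Sigma^{r+1}$ and argue the set containment
\begin{equation*}
\IStates_{\negthinspace\Sym_0\Sym_1\ldots\Sym_r}\subseteq\IStates_{\negthinspace\Sym_1\ldots\Sym_r}.
\end{equation*}

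First I would unfold the definitions: if $s\in\IStates_{\Sym_0\Sym_1\ldots\Sym_r}$, then by Eq.~\eqref{eq:istatesm} there is some $x\in\States$ with $\TFE(x,\Sym_0\Sym_1\ldots\Sym_r)=s$. Using the inductive extension of the transition function, write $\TFE(x,\Sym_0\Sym_1\ldots\Sym_r)=\TFE(x',\Sym_1\ldots\Sym_r)$ where $x'=\TF(x,\Sym_0)$. Thus $s$ is witnessed by the state $x'$ as a member of $\IStates_{\Sym_1\ldots\Sym_r}$, establishing the containment. Taking cardinalities gives $\lvert\IStates_{\Sym_0\Sym_1\ldots\Sym_r}\rvert\le\lvert\IStates_{\Sym_1\ldots\Sym_r}\rvert$ for every choice of $\Sym_0$.

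Next I would lift this pointwise bound to the maxima. For the fixed suffix $\Sym_1\ldots\Sym_r$ achieving the maximum over length-$(r+1)$ strings, every one-symbol extension is dominated by a length-$r$ set, so
\begin{equation*}
\MaxNrIStates_{,r+1}=\max_{\Sym_0\ldots\Sym_r}\lvert\IStates_{\Sym_0\ldots\Sym_r}\rvert
\le\max_{\Sym_1\ldots\Sym_r}\lvert\IStates_{\Sym_1\ldots\Sym_r}\rvert=\MaxNrIStates_{,r}.
\end{equation*}
One subtlety deserving care is the treatment of the error state~\ErrorState, which the definitions exclude (note the $\setminus\{\ErrorState\}$ in Algorithm~\ref{algo:TwoSymISet}); since removing~\ErrorState\ only removes elements, the containment argument is unaffected. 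The anticipated main obstacle is not the core set-inclusion step, which is routine, but rather stating the role of $\omega$ precisely: I must argue that lookahead beyond $\omega$ symbols adds nothing, because any accepting path has length at most $\omega$, so no postfix longer than $\omega$ can be the suffix of a path to a reachable non-error state---this is what justifies terminating the chain at $\MaxNrIStates_{,\omega}$ rather than continuing indefinitely.
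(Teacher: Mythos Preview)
Your proposal is correct and, in fact, cleaner than the paper's own argument. The paper proceeds \emph{indirectly}: it fixes a DFA with $\lvert\IStates_\sigma\rvert=1$ for some symbol~$\sigma$, assumes for contradiction that $\lvert\IStates_{\sigma'\sigma}\rvert=2$, and observes that the two alleged target states would each require an incoming $\sigma$-transition, contradicting $\lvert\IStates_\sigma\rvert=1$. From this special case it concludes $\lvert\IStates_\sigma\rvert\ge\lvert\IStates_{\sigma'\sigma}\rvert$ and then asserts that ``the extension to the general case is straightforward.'' Your direct set-containment argument $\IStates_{\sigma_0\sigma_1\ldots\sigma_r}\subseteq\IStates_{\sigma_1\ldots\sigma_r}$ via the witness $x'=\TF(x,\sigma_0)$ handles the general case immediately, without the detour through a contradiction on a special instance; it also makes transparent \emph{why} the bound holds (prepending a symbol can only restrict the reachable set), which the paper's argument obscures.

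One minor remark: the paper's proof does not actually address the role of~$\omega$ at all---it only establishes the monotonicity step and leaves the termination of the chain implicit in the lemma statement. So your ``anticipated main obstacle'' is not something the paper resolves either; the monotonicity inequality is the entire content of the proof, and the chain could in principle continue indefinitely without affecting the claim.
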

\begin{proof}
Indirect. Without loss of generality
we assume a DFA with exactly one of its transitions labeled by a symbol~$\Sym\in\Sigma$, and state~\State\
being the target state of this transition. For this DFA, $\lvert\IStates[\Sym]\rvert=1$.
Given another symbol~$\Sym'\in\Sigma$, we assume that~$\lvert\IStates[\Sym'\Sym]\rvert=2$.
Then by the definition of~$\IStates_{\negthinspace\Sym'\Sym}$ in Eq.~\eqref{eq:istatesm}, 
this DFA must have two distinct states that are the target of a path labeled by a string with postfix~$\Sym'\Sym$.
However, this implies that these two target states have an incoming
transition labeled~$\Sym$, which contradicts our initial
assumption that~$\lvert\IStates[\Sym]\rvert=1$.
Thus for any two symbols~\Sym\ and~$\Sym'$, it holds that $\lvert\IStates[\Sym]\rvert\ge\lvert\IStates[\Sym'\Sym]\rvert$.
The extension to the general case
$\lvert\IStates[\Sym_1\ldots\Sym_k]\rvert\ge\lvert\IStates[\Sym'\Sym_1\ldots\Sym_{k}]\rvert$
is straightforward and the lemma follows.
\end{proof}

\subsection{Time Complexity}\label{sec:TimeComplexity}

The time complexity of sequential DFA matching is
$\BigO(n)$, where $n$ is the length of the input string.  Our basic
speculative DFA matching approach from Section~\ref{sec:basicAlgo}
distinguishes the first chunk from subsequent chunks
to partition the input string such that the matching load is balanced.
The time required for parallel matching is on the order of
\begin{equation}
    \BigO\left(\frac{n\cdot\NrStates}{\NrStates+\NrProcessors-1}\right).
    \label{eq:timecomp}
\end{equation}
The speedup of  Algorithm~\ref{algo:basicmatch} over sequential matching
is thus on the order of
\begin{equation}
   \BigO\left(1+\frac{\NrProcessors-1}{\NrStates}\right).
   \label{eq:speedup}
\end{equation}
It follows that in terms of algorithm complexity,
this approach
will not produce a speed-down, i.e., it
is failure-free. 

Eq.~\eqref{eq:pS_timecomp} shows the
time complexity of parallel DFA matching with reduced sets of
potential initial states from Section~\ref{sec:OptMatch}.  Because computing~$\MaxNrIStates_{,r}$
constitutes overhead,
we have an additional term
$\BigO(\NrStates\cdot\lvert\Sigma\rvert^r)$,
where $r$ is the number of reverse lookahead symbols.

\begin{equation}
    \BigO\left(\NrStates\cdot\lvert\Sigma\rvert^r
     +\frac{n\cdot\lvert\MaxNrIStates_{,r}\rvert}{\lvert\MaxNrIStates_{,r}\rvert+\NrProcessors-1}\right)
    \label{eq:pS_timecomp}
\end{equation}
If $n \gg \NrStates$, or if $\MaxNrIStates_{,r}$ is computed off-line,
the additional term can be neglected.
Even when computing $\MaxNrIStates_{,r}$ on-line, for all considered
cases the approach with reduced sets of
potential initial states showed better performance.

Our method is capable of utilizing processors of different processing
capacities, which is relevant for heterogeneous multiprocessors and
for cloud computing environments. Different processor
weights~\Weight[] encode processors' capacities. Because we
employ weights to calculate chunk sizes for processors, we
encode different processing capacities in the size of each processor's
chunk.  If we do not apply weights for processors of different
processing capacities, the following equation describes the overall time
complexity,

\begin{equation}
    \BigO\left(\frac{nm}{m+p-1}\right),
    \label{eq:w_timecomp}
\end{equation}
where $p=\NrProcessors\times\Weight[worst]$ and $\Weight[worst]= 
\min(\Weight[0],\Weight[1],...,\Weight[{|\Processors|-1}])$ and $m$ is
either $\vert\States\rvert$ or $\lvert\MaxNrIStates_{,r}\rvert$.
Incorporating reduced sets of potential initial
states with Eq.~\eqref{eq:speedup} yields a speedup on the order of
\begin{equation}
\BigO\left(1+\frac{\NrProcessors-1}{\NrStates\cdot\gamma}\right)\text{,}
\label{eq:obtained_speedup}
\end{equation}
where the ratio~$\gamma=\frac{\MaxNrIStates_{,r}}{\NrStates}$
of reduced sets of potential initial states~$\MaxNrIStates_{,r}$
to~\NrStates\ constitutes a structural DFA property.
We have~$0<\gamma\le 1$, where the magnitude of actual values
for~$\gamma$ is negatively correlated to the profitability of our
optimization for particular DFAs.

\section{Implementation}\label{sec:Implementation}
We implemented our speculative DFA matching algorithms for the three
architectures summarized in Table~\ref{tab:HWSpec}.
For our shared-memory multicore architecture implementation we were granted access 
to the Intel Academic Program Manycore Testing Lab~(Intel MTL, \cite{MTL}), which
is an experimental environment of non-commercial, 40-core nodes provided by Intel 
mainly for educational purposes. POSIX threads~\cite{Butenhof97} were used 
to parallelize DFA matching across multiple cores.
To vectorize our speculative DFA matching algorithm, we employed
version~2 of the Advanced Vector Extensions (AVX2) of the forthcoming
Intel Haswell CPU architecture~\cite{AVX2}.
The AVX2 instruction set provides
256~bit registers enabling 8-fold vectorization on 32-bit integer
and single precision floating point data types. 
AVX2 is the first x86 instruction set extension to provide a gather-operation
for vectorized indexed read operations from memory
(vectorized register-indirect addressing). To the best of our knowledge,
we are the first to utilize gather operations to vectorize DFA matching.
Because the Haswell architecture is scheduled to be released in 2013,
there is no processor available yet which supports  AVX2 instructions.
Hence, we used Intel's Software
Development Emulator~(SDE, \cite{SDE}) to emulate AVX2 instructions.
To evaluate our approach in a cloud computing environment,
we employed m2.4xlarge and cc2.8xlarge instances of
the Amazon EC2 elastic computing cloud~\cite{EC2}.
Each EC2 instance 
provides a nominal dedicated 
compute capacity stated in Amazon's proprietary Compute Unit~(CU) measure.
Hardware specifications of the used Amazon EC2 instance types (nodes) 
are given in Table~\ref{tab:HWSpec}. 
For our experiments, we employed 20~instances with a total of 320~physical cores.
For communication across threads, the MPI message passing interface was used.

\begin{table*}
    \centering
    \begin{tabular}{|m{1.8cm}|m{2.7cm}|c|c|c|c|}
    \hline
    Name & CPU Model & CPUs & $\frac{\mbox{Cores}}{\mbox{CPU}}$
    & Clock Freq. & Note\\
    \hline
    \hline
    Intel MTL & Intel Xeon E7-4860 & 4 & 10 & 2.27~GHz &n/a\\
    \hline
    SDE emulator on local server& AVX2/Haswell on Intel Xeon E5405 host & n/a & n/a& n/a &n/a\\
    \hline
    Amazon EC2 (m2.4xlarge) & Intel Xeon X5550 & 2 & 4 & 2.67~GHz & 26~EC2 CUs\\
    \hline
    Amazon EC2 (cc2.8xlarge) & Intel Xeon E5-2670 Sandy Bridge & 2 & 8 & 2.60~GHz & 88~EC2 CUs\\
    \hline
    \end{tabular}
    \caption{Hardware Specifications}
    \label{tab:HWSpec}
\end{table*}

\begin{figure}
\centering
\begin{tabular}{ccc}
    \resizebox{4cm}{!}{
    \begin{tikzpicture}[->,>=stealth']
        \tikzstyle{every initial by arrow}=[->]
        \node[smallstate, initial] (S0){\StartState};
        \node[smallstate,
              right of = S0,
              yshift= 3em,
              node distance = 4em] (S1){\State[1]};
        \node[smallstate,
              right of = S0,
              yshift= -3em,
              node distance = 4em] (S2){\State[2]};
        \node[smallfinalstate,
              right of = S0,
              node distance = 8em] (S3){\State[3]};
        \node[below of = S2, node distance = 8mm] (dummy){};
        \path   (S0)    edge[bend left]    node [above left]{$a$}  (S1)
        (S0)    edge[bend right]           node[above right]{$b$}      (S2)
        (S1)    edge[bend left]     node[above right]{$b$}      (S3)
        (S2)    edge                node[xshift=-1.5mm]{$a$}    (S1)
        (S2)    edge[bend right]    node[below right]{$b$}      (S3)
        (S3)    edge[loop right]    node{$a$}                 ()
        ;
    \end{tikzpicture}%
    }\rule{4mm}{0mm}&
    \begin{tabular}[b]{|ccc|} 
    \hline
    \multicolumn{2}{|c}{(START)$|$-}&0\\
    0 & a & 1\\
    0 & b & 2\\
    1 & b & 3\\
    2 & a & 1\\
    2 & b & 3\\
    3 & a & 3\\
    3 &\multicolumn{2}{c|}{-$|$(FINAL)}\\
    \hline
    \end{tabular}
    &
    \setlength\tabcolsep{1pt}
    \hskip4mm
    \begin{tabular}[b]{rlll}
    \texttt{SBase = \{}&\texttt{1,}&\texttt{2,}\\
    &\texttt{4,}&\texttt{3,}\\
    &\texttt{1,}&\texttt{3,}\\
    &\texttt{3,}&\texttt{4,}\\
    &\texttt{4,}&\texttt{4 \};}\\\rule{0mm}{6mm}\\
    \end{tabular}%
    \\[2mm]
    \small{(a)} &\small(b)&\small{(c)} \\[5mm]
    \multicolumn{3}{c}{\hskip15mm\STR=bababbababbababababa}\\[1mm]
    \multicolumn{3}{c}{\hskip15mm\texttt{IBase=\{1,0,1,0,1,1,0,1,0,1,1,0,1,0,1,0,1,0,1,0\};}}\\[2mm]
        \multicolumn{3}{c}{\hskip15mm(d)}
\end{tabular}
\caption{Example DFA~(a), Grail+ format~(b), \texttt{SBase} 1-dimensional
    transition table representation~(c), and representation of the
    DFA input~(d).}
\label{fig:SBaseExample}
\end{figure}

We tailored our DFA data-structures to maximize performance and to utilize
the AVX2 instruction set, in particular the novel AVX2 32-bit gather
operations.
To generate minimal DFAs from regular expressions, we use 
Grail+~\cite{Grail95,Grail}, which is a formal language toolset for the manipulation
and application of regular expressions and automata. 
Our DFA matching framework reads DFAs and input strings in Grail+ format and converts
them to our framework's internal representation.

DFA transition tables are usually represented as 2-dimensional
arrays, with rows  for each state and one column for each character~$x\in\Sigma$.
With our representation, 2-dimensional arrays are flattened into
consecutive, 1-dimensional arrays.
This representation allows to store multiple DFAs of different
alphabet sizes, and it facilitates application of AVX2 gather operations (i.e.,
gather operations allow 1-dimensional indexed reads only).
Fig.~\ref{fig:SBaseExample}(a) shows our running example DFA
from Fig.~\ref{fig:ExampleDFA} and the DFA's Grail+
format~(Fig.~\ref{fig:SBaseExample}(b)). Our
transition table representation is given in C-like pseudo
code in Fig.~\ref{fig:SBaseExample}(c).
Grail+ encodes DFA states as integers. Lines in Grail+ format
represent triples $\langle\negthinspace$~source-state, transition-label, target-state~$\negthinspace\rangle$, with
the start and accepting states indicated on separate lines. Our DFA representation
encodes states as row-indexes into the DFA transition table.
Note that State~4
represents the error-state~\ErrorState.
Row-indexes are calculated relative to the base address \texttt{SBase} of the array.
In case of a second DFA stored after the running example, the second DFA's row
indexes will also be stored relative to \texttt{SBase}.
For the input string, we introduce a 1-dimensional array \texttt{IBase} of integers. 
For example, in Fig.~\ref{fig:SBaseExample}(d), character~$a$ is mapped
to the value~$0$, and character~$b$ is mapped to~$1$.
Multiple DFA input strings may be concatenated in array~\texttt{IBase}.
Generation of this DFA and input string representation can be trivially implemented
while parsing the Grail+ DFA input data.
Our representation allows to run a single DFA simultaneously on multiple input strings, or to match
multiple DFAs on one or more input strings.

\begin{lstlisting}[language=C, 
    caption=Baseline matching routine in C for a possible initial state 
    of a chunk, label=list:Cmatching, ]
// Get address of first and last character of chunk:
INPUT_T* curPtr=&IBase[StartPos];
INPUT_T* endPtr=&IBase[EndPos];

// Get starting state and perform matching:
STATE_T CurrentState=InitialState*NrSymbols;
for ( ; curPtr!=endPtr; curPtr++) {
  CurrentState=SBase[CurrentState + *curPtr];
}
\end{lstlisting}

Listing~\ref{list:Cmatching} shows how a chunk is matched for one possible initial
state on multicore architectures. It should be noted
that by encoding the transition table's DFA states as offsets relative to the \texttt{SBase} base address,
2-dimensional table lookups of conventional DFA representations
are simplified to a 1-dimensional
lookup that avoids the rows-times-column multiplication of 2-dimensional
arrays---with our representation, we only
add the current state's offset to the current
input symbol (line~8 of Listing~\ref{list:Cmatching}).
We employ pointers to access the input and to detect loop termination, thereby avoiding the need
for maintaining a separate loop counter variable.
When compiled to x86-64, this matching loop consists of only two add operations, one
comparison, one indexed load and one conditional jump, which compares favorable to
Grail+'s matching loop implemented in C++, which requires more than an order of magnitude
more instructions for the same purpose. We used a variant of Listing~\ref{list:Cmatching} for sequential
DFA matching. This sequential matching routine was used as an efficient yardstick for the comparison to
our parallelized
matching algorithms, because we found our sequential matching routine to be more efficient than Grail+, and the closest approach
from the related work, i.e., \cite{Holub:2009}, incurred slowdowns over sequential matching (see
Fig.~\ref{fig:speedup_mtl_holub} and Section~\ref{sec:RelatedWork}).

\subsection{Vectorized DFA matching using AVX2 instruction set extensions}

\begin{lstlisting}[caption=Vectorized DFA matching of chunks using AVX2 intrinsics, label=list:AVX2matching]
int i;
__m256i InpSyms, Ones = _mm256_set1_epi32 (1);

// Load initial indices into SBase and IBase arrays:
__m256i States = _mm256_load_si256((__m256i const *) CStatesInit);
__m256i InpIdx = _mm256_load_si256((__m256i const *) CInputInit);

for (i = ChunkLength; i>0; i--) {
   // Load input characters from IBase, indexed by InpIdx:
   InpSyms = _mm256_i32gather_epi32 (IBase, InpIdx, 4);
   // Calculate indices of next states:
   States = _mm256_add_epi32 (States, InpSyms);
   // Load next state values from SBase, indexed by States:
   States = _mm256_i32gather_epi32 (SBase, States, 4);
   // increase input indices by one:
   InpIdx = _mm256_add_epi32(InpIdx, Ones);
}
\end{lstlisting}

Listing~\ref{list:AVX2matching} shows our core matching loop with 8-fold
vectorization employing AVX2 vector instruction intrinsics~\cite{AVX2}.
Data type~\texttt{\_\_m256i} represents an 8-way vector containing
8 32-bit \texttt{int} variables. Variables~\texttt{States} and
\texttt{InpIdx} contain the indices into the state transition
table~\texttt{SBase} and the input array~\texttt{IBase}. They
are initialized to precomputed starting-positions of chunks in lines~5 and~6. We use the
\texttt{\_mm256\_i32gather\_epi32} intrinsic to perform
vectorized, indexed loads from the \texttt{SBase} and \texttt{IBase} arrays.
For example, in line~8, 8~input characters are loaded from \texttt{IBase}. Note
that the offsets in vector \texttt{InpIdx} are scaled by a factor of~4 (the third argument of the intrinsic),
to account for the 32-bit size of type~\texttt{int}. 
For further details on the used intrinsics, we refer to~\cite{AVX2}.
The reason to count the loop
index variable down instead of up is because the decrement instruction
will already set the x86 CPU's sign flag when we cross zero. This way we 
save a \texttt{cmp} instruction which yields additional 12\% of performance 
improvement. Neither GCC nor Intel's ICC managed to generate optimal assembly code
from Listing~\ref{list:AVX2matching}, which required us to use inline assembly
instead. Auto-vectorization of sequential DFA matching is out of reach for compilers, because of the 
dependencies between current and next DFA state.

\subsection{DFA Matching on Cloud Computing Architectures\label{sec:cloudImplementation}}

\newcommand{\UNode}{\protect\begin{tikzpicture}[darkstyle/.style={circle,fill=black,minimum size=1.3mm, inner sep=0mm},
                      scale=1,
                      ->,>=stealth']
                      \node [circle,draw,minimum size=1.3mm,inner sep=0mm]  (50) at (0,0) {};
                      \end{tikzpicture}}
\def\NSkip{0.5}
\def\SSkip{4.5}
\begin{figure}[t]
    \centering
    \hspace{-6mm}%
    \begin{tikzpicture}[darkstyle/.style={circle,fill=black,minimum size=1.3mm, inner sep=0mm},
                        scale=1,
                        ->,>=stealth']
        \node [darkstyle]  (L1) at (\NSkip*3,2) {}; 
        \node [right, xshift=0.8mm, yshift=-0.5mm] (Leader1) at (L1.east) {\scriptsize node leader};
        \node [above,xshift=-4mm,yshift=-1.5mm] at (Leader1.north west) {$\scriptstyle \StateMap[0,\NrCores-1]$};
        \node [darkstyle]  (Master) at (0.5*\SSkip + \NSkip*3,2.7) {}; 
        \foreach \x in {1,2,4} {
             \node [darkstyle]  (\x0) at (\NSkip*\x,1.1) {}; 
             \draw [->] (\x0) -- (L1);
        }
        \node [circle,draw,minimum size=1.3mm,inner sep=0mm](50) at (\NSkip*5+0.2,1.1) {}; 
        \node [below]  (L10) at (10.south) {$\scriptstyle\StateMap[0]$}; 
        \node [below]  (L20) at (20.south) {$\scriptstyle\StateMap[1]$}; 
        \node [below]  (L40) at (40.south) {$\scriptstyle\StateMap[\NrCores-1]$}; 
        \node [below]  (L50) at (50.south) {\scriptsize n/a}; 
        \draw [-,style=dotted] (20) -- (40);
        \draw [-,style=dotted] (L20) -- (L40);
        \node []  (T1) at (\NSkip-1.1,1.7) {\begin{minipage}{3cm}\scriptsize\centering intra-node\\[-0.2mm]communication:\end{minipage}}; 
        \node []  (T2) at (\NSkip-1.1,2.8) {\begin{minipage}{3cm}\scriptsize\centering inter-node\\[-0.2mm]communication:\end{minipage}}; 

        \node [darkstyle]  (L2) at (\SSkip+\NSkip*3,2) {};
        \node [left, xshift=-1mm, yshift=-0.5mm] (Leader2) at (L2.west) {\scriptsize node leader};
        \node [above,xshift=5.0mm,yshift=-1.5mm] at (Leader2.north east) {$\scriptstyle \StateMap[\NrProcessors-\NrCores,\NrProcessors-1]$};
        \foreach \x in {1,4} {
             \node [darkstyle]  (N\x0) at (\SSkip+\NSkip*\x,1.1) {}; 
             \draw [->] (N\x0) -- (L2);
        }
        \node [circle,draw,minimum size=1.3mm,inner sep=0mm]  (P2) at (\SSkip+\NSkip*5+0.2,1.1) {}; 
        \node [below]  (LN10) at (N10.south) {$\scriptstyle\StateMap[\NrProcessors-\NrCores]$}; 
        \node [below]  (LN40) at (N40.south) {$\scriptstyle\StateMap[\NrProcessors-1]$}; 
        \node [below]  (P2) at (P2.south) {\scriptsize n/a}; 
        \draw [-,style=dotted] (N10) -- (N40);
        \draw [-,style=dotted] (LN10) -- (LN40);

        \draw [->] (L1) -- (Master);
        \draw [->] (L2) -- (Master);
        \node [below] at (Master.south) {\scriptsize master};
        \node [above,yshift=-0.6mm] at (Master.north) {$\scriptstyle \StateMap[0,\NrProcessors-1]$};
        \node (T4) at (0.5*\SSkip + \NSkip*3,1.14) {\scriptsize workers};
        \node [xshift=-3mm] (T5) at (\NSkip*3,0.27) {\raisebox{1mm}{\scriptsize node 0}};
        \node [xshift=-1.2mm] (T6) at (\SSkip+\NSkip*3,0.27) {\raisebox{1mm}
              {\scriptsize node~$\left\lceil\frac{\NrProcessors}{\NrCores}\right\rceil$}};
        \draw [-,style=dotted] (Leader1) -- (Leader2);
    \end{tikzpicture}\vspace{-2.8mm}%
    \caption[Hierarchical merging of \StateMap-vectors]{Hierarchical merging of \StateMap-vectors to reduce message delay and variability on EC2. The
             number of available processing cores is denoted by~\NrProcessors, and the number of
             cores allocated per node is denoted by~\NrCores. One core per node is left unallocated,
             to avoid performance degradation with hypervised EC2 nodes.
             Unallocated cores are denoted by
             symbol~``\UNode''.}
    \label{fig:Merge}
\end{figure} 

With our implementation for the EC2 cloud computing environment, 
we employed
MPI-based message-passing to communicate between cores
for merging $\StateMap$-vectors.  
The chosen MPI implementation was MPICH2 version~1.4~\cite{MPICH2}.
As mentioned previously, parallel reduction based on 
binary trees did not achieve satisfactory performance. 
We found the message transfer times of messages between EC2 nodes
too high to make binary reduction profitable. E.g., the average inter-node
transfer time for a single $\StateMap$-vector
was \SI{362}{\micro\second}, with a standard-deviation
of 3.6\%. In comparison, the same intra-node message would take on average only
\SI{2.68}{\micro\second},
with a standard-deviation of 0.14\%. 
This observation is in line with a recent study that reports large delay variations
and unstable network throughput for the EC2 cloud~\cite{Wang:2010}. 

To account for the message delay and variations on EC2, we devised 
a variant of parallel reduction that is hierarchical with respect to intra-node and inter-node communication.
This 2-tier merging approach is based on the observation that intra-node messages showed
substantially lower message transfer times and variations than inter-node communication.
Our reduction
proceeds in two steps, as depicted in Fig.~\ref{fig:Merge}.
In the first step, $\StateMap$-vectors are merged locally by a designated node leader.
In the second step, node leaders send their $\StateMap$-vectors 
to the master process which combines them to compute the overall matching result.
Without loss of generality, this 2-step merging scheme requires that on each EC2 node,
DFA-matching worker processes
are allocated to adjacent chunks.
Our worker-to-node allocation scheme is parameterized by the number of cores to utilize per node,
denoted by~\NrCores. For reasons explained below, we leave one core unallocated per EC2 node. 
Fig.~\ref{fig:Merge} depicts the computation of $\StateMap$-vectors by workers (for one chunk),
node leaders (the combined map over all chunks matched on a node) and the master (the overall map from
the first to the last chunk). 
Unallocated cores are denoted by
symbol~``\begin{tikzpicture}[darkstyle/.style={circle,fill=black,minimum size=1.3mm, inner sep=0mm},
                        scale=1,
                        ->,>=stealth']
        \node [circle,draw,minimum size=1.3mm,inner sep=0mm]  (50) at (0,0) {};
\end{tikzpicture}'').

Our two-tier merging scheme outperformed parallel binary reduction and sequential merging for even
the largest EC2 clusters (i.e.,  up to 20 nodes, 
which is the maximum possible EC2 cluster size~\cite{EC2limit}). 
We found  MPI messages among
processes on the same node to show both low latency and low variability.
We conjecture
that MPICH2 applies shared-memory message passing optimizations
similar to~\cite{Karonis:2000} for node-local communication. Moreover, node-local communication
is free from delay variations induced by the network that connects nodes.
Therefore, with our merging scheme the only communication step subjected to EC2's message variability is the
merging step conducted by the master.  
This compares favorably to any parallel reduction scheme with more than one reduction step involving
inter-node communication, because each such reduction step may suffer from message delays caused by the
underlying network. 

As mentioned above, we deliberately left one core per EC2 node
unallocated.
We observed that without sacrificing one core per EC2 node, there was a high probability that one of the workers on each node
would experience a matching performance on the order of one magnitude lower than the workers on the remaining cores.
This performance degradation did not affect the offline profiling step, for which we took the median of
a series of partial matching runs. However,
this performance degradation randomly showed
with DFA matching. Because we could not reproduce this problem
on a local cluster of Linux computers, we attribute this
performance degradation to EC2 hypervisor activities that occasionally preempted the execution of one arbitrary worker thread per node.
Leaving one core unallocated on EC2 eliminated this problem. Given the increasing numbers of cores per CPU, leaving one core
unallocated can be considered an increasingly small sacrifice (e.g., our experiments were conducted with EC2 nodes providing 8 and 16 cores,
respectively). 

\section{Experimental Results}\label{sec:ExperimentalResults}
We conducted experiments for both our basic and optimized speculative matching
algorithms 
and the approach of Holub and \v{S}tekr presented in~\cite{Holub:2009}.
We employed \NrPCREs~regular expressions from the PCRE library~\cite{PCRELib} 
and \NrPROSITEs~protein patterns from  the
PROSITE protein database~\cite{PROSITE}. Protein patterns were selected
as an example for the application domain of DNA 
sequence analysis.  We compared our algorithms to the baseline
sequential DFA matching algorithm from Section~\ref{sec:Implementation}
and to the currently used matching engine that comes with PROSITE. 
All 
PCRE regular expressions and PROSITE protein patterns were
converted to unique minimum DFAs using Grail+~\cite{Grail95,Grail}. 
All experiments except the experiments on EC2
were conducted with inputs of one million characters. 
Because we employed up to 288~cores on EC2, the problem sizes
of one million characters turned out too small for precise performance measurements.
Thus we used
inputs of 
8~million characters on EC2. We show the scalability of our approach on
both the MTL and the EC2 platforms for up to 10~billion characters in
Section~\ref{subsec:struct}.
Note that for increased readability we represent speed-downs by
negative values instead of fractional values. For example, conventional
denotation for a 2x~speed-down is $\frac{1}{2}$ but we use -2. 

\label{subsec:EvalPar}
\begin{figure*}[ht]
\centering
    \begin{tabular}{@{}c@{}c@{}}
        \subfigure[PROSITE speedups, incl.~\MaxNrIStates\ opt.]
            {\label{fig:speedup_mtl_par_opt_prosite}
             \hspace{-2.2mm}
             \includeGraphics[clip=true, height=4.3cm, trim=4mm 6mm 0 0]
                             {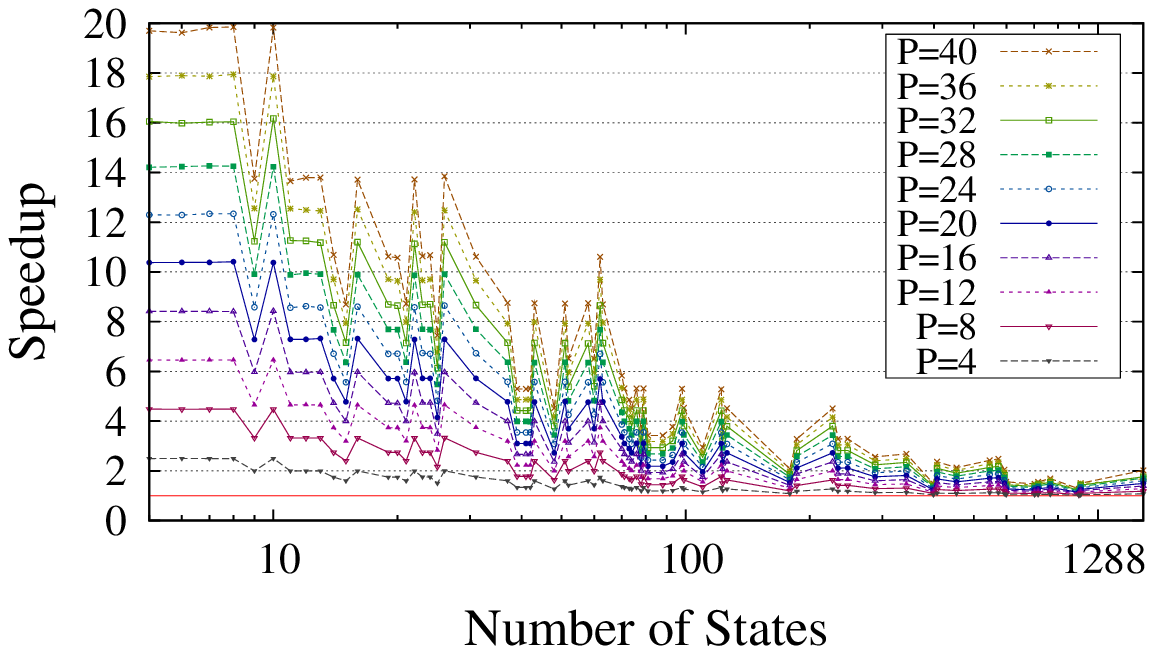}}
        &\hspace{-8mm}{
        \subfigure[PROSITE speedups of \MaxNrIStates\ opt.]
            {\label{fig:speedup_mtl_opt_against_noOpt_prosite}
             \includeGraphics[clip=true, height=4.3cm, trim=12mm 6mm 0 0]
                             {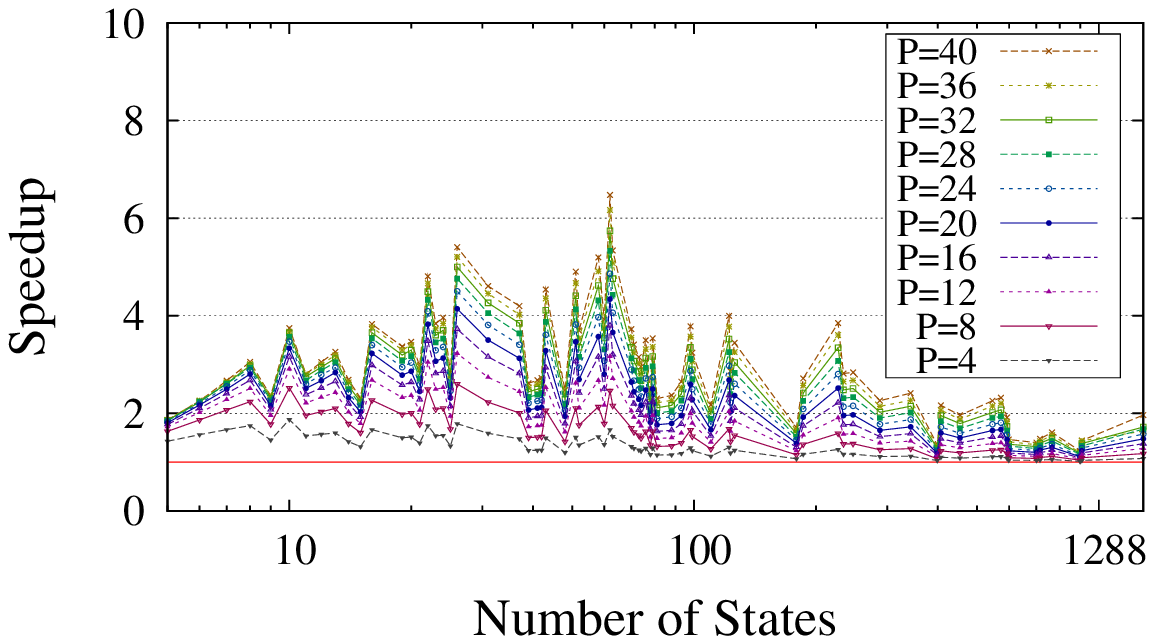}}}
        \vspace{-4mm}\\
        \subfigure[PCRE speedups, incl.~\MaxNrIStates\ opt.]
            {\label{fig:speedup_mtl_par_opt_re}
             \hspace{-2.2mm}
             \includeGraphics[clip=true, height=4.3cm, trim=4mm 6mm 0 0]
                             {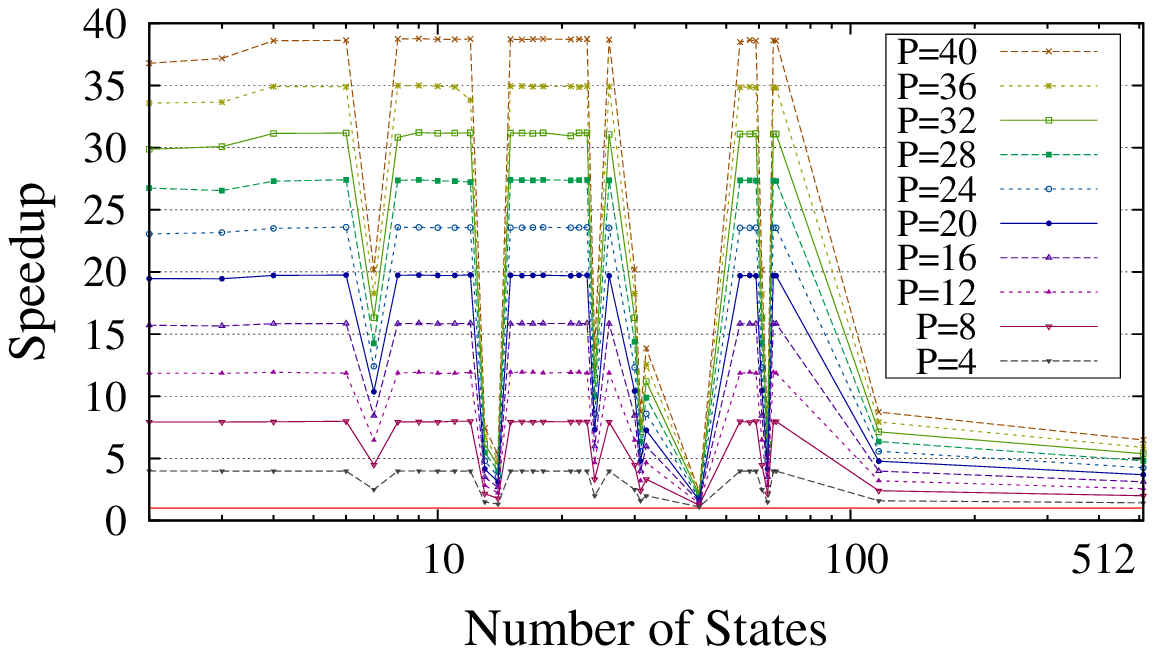}}
        &\hspace{-8mm}{
        \subfigure[PCRE speedups of \MaxNrIStates\ opt.]
            {\label{fig:speedup_mtl_opt_against_noOpt_re}
             \includeGraphics[clip=true, height=4.3cm, trim=12mm 6mm 0 0]
                             {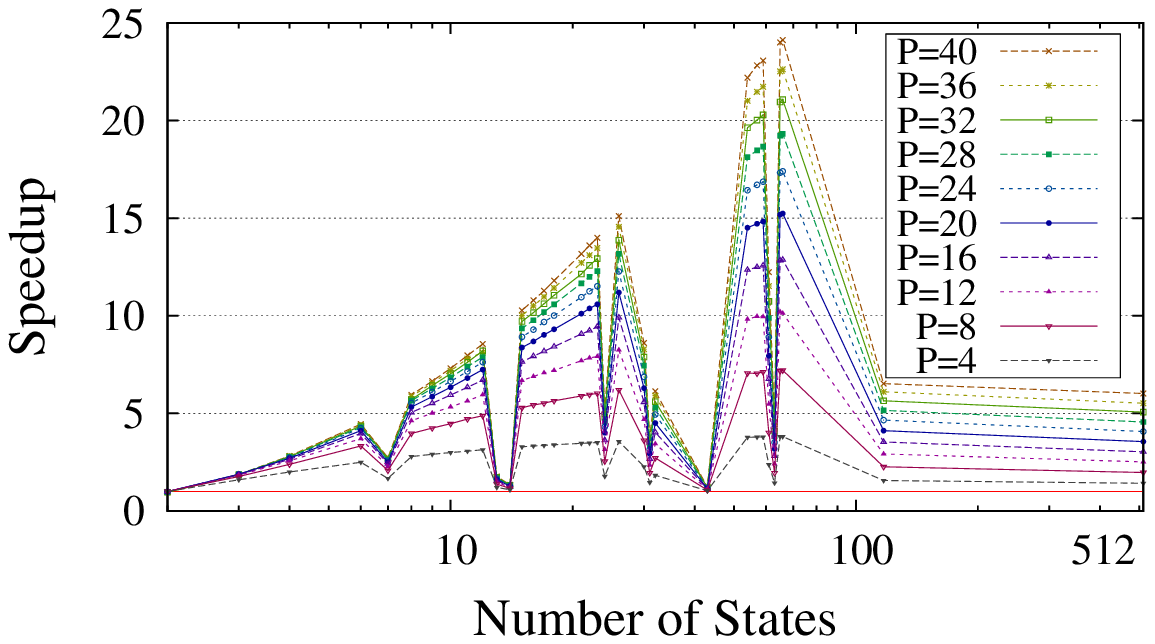}}}
        \\
    \end{tabular}
    \caption{Speedups on a shared-memory 4 CPU (40 cores) node of the Intel MTL 
        for PROSITE patterns~$(a)$ and PCRE~$(c)$, including the \MaxNrIStates\
        optimization for four symbols reverse lookahead.
        Fig.~\ref{fig:speedup_mtl_opt_against_noOpt_prosite} and 
        Fig.~\ref{fig:speedup_mtl_opt_against_noOpt_re} show speedups 
        achieved by the \MaxNrIStates\ optimization over matching for \NrStates.}
    \label{fig:speedup_mtl_par}
\end{figure*}

Fig.~\ref{fig:speedup_mtl_par} shows the results of
our speculative parallel DFA membership test
with and without applying four symbol reverse lookahead, for the PROSITE and PCRE benchmark
suites conducted on the Intel MTL. We used GCC~4.5.1 on RedHat RHEL~5.4
(x86\_64 kernel version~2.6.18-164.el5).
$x$-axes denote the number of states~\NrStates, and $y$-axes
denote the speedup over sequential matching.
We note the following observations: (1)~Our algorithms
always show better performance than sequential matching, despite  
the overhead
from redundant computations incurred by speculative parallelization.
Redundant computations constitute matching of subsequent chunks for multiple DFA states, which contrasts
sequential DFA matching where the input is only matched for the start state~\StartState.
(The red horizontal 
lines denote the break-even point where the speedup over sequential
matching is 1.) The fact that there are no speed-downs validates
the failure-freedom of our speculative parallelization.
(2)~Speedups are always proportional to 
\NrProcessors, as predicted by the complexity analysis
in Section~\ref{sec:TimeComplexity}. This proves our basic assumption that
the number of symbols
to be processed per processor decides the overall matching time despite the 
overhead due to parallelization.
The performance
improvements due to our \MaxNrIStates\ optimization are shown in 
Fig.~\ref{fig:speedup_mtl_opt_against_noOpt_prosite} and 
Fig.~\ref{fig:speedup_mtl_opt_against_noOpt_re}.

This result compares favorable to an approach presented
in~\cite{Holub:2009}, which has a complexity of $\BigO(\frac{n\cdot\NrStates}{\NrProcessors})$
and thus achieves speedups only if the number of processors is larger than the number of
states. We evaluated the approach from~\cite{Holub:2009} for both PCRE and the PROSITE patterns,
as depicted in
Fig.~\ref{fig:speedup_mtl_holub}.
In-line with the algorithm's complexity results,
the previous approach cannot achieve speedups when 
$\NrProcessors\leq\NrStates$.
We observed an almost 390x  speed-down for a DFA with 788 states.
In contrast, our algorithm achieved a speedup between 2.3x and 38.8x for
PCRE, and between 1.3x and 19.9x for PROSITE.

\begin{figure}[ht]
\centering
    \begin{tabular}{@{}c@{}c@{}}
        \subfigure[PROSITE protein patterns]
            {\label{fig:speedup_mtl_holub_prosite}
             \hspace{-2.2mm}
             \includeGraphics[clip=true, height=4.2cm, trim=4mm 6mm 0 0]
                             {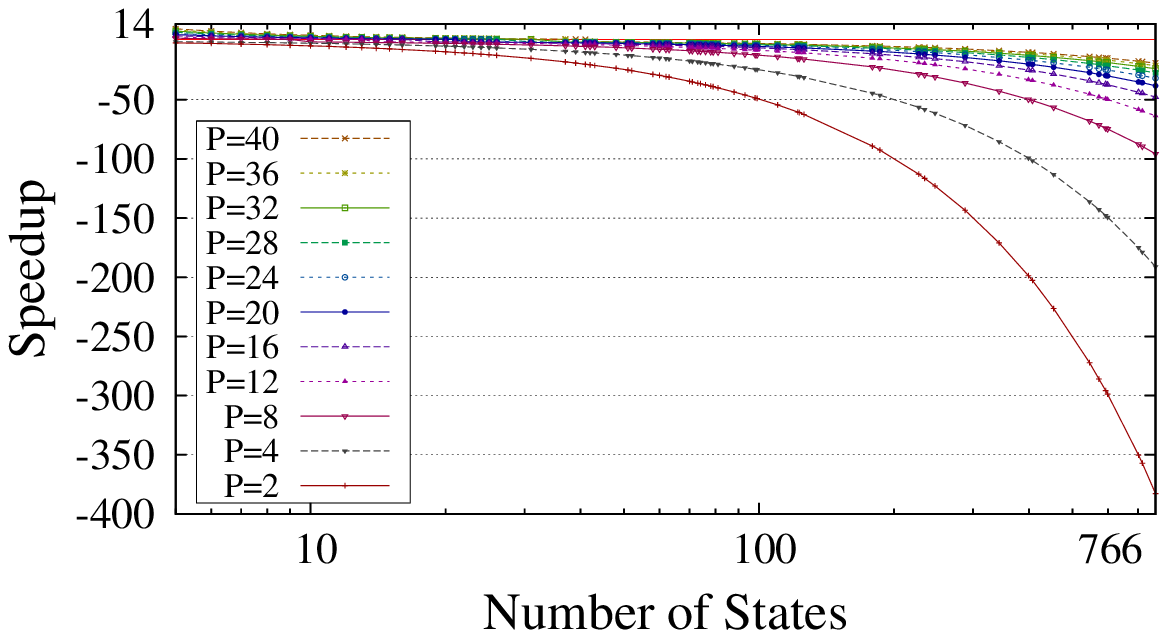}}
        &\hspace{-7mm}{
        \subfigure[PCRE regular expressions]
            {\label{fig:speedup_mtl_holub_re}
             \includeGraphics[clip=true, height=4.2cm, trim=10mm 6mm 0 0]
                             {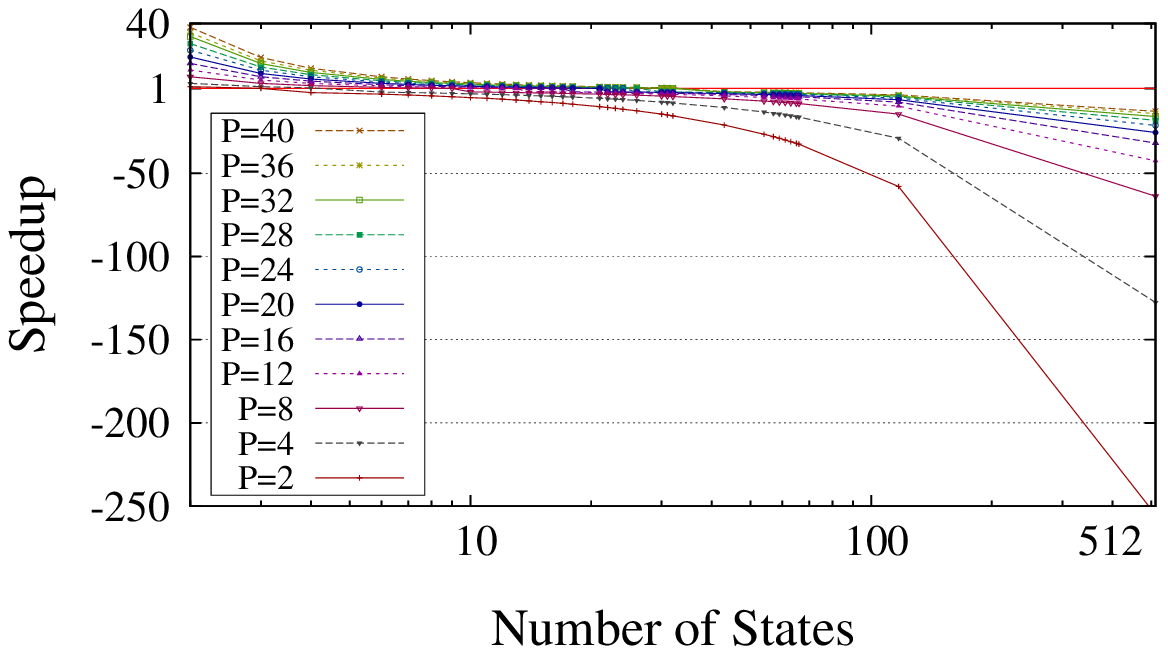}}}
        \vspace{-2mm}\\
    \end{tabular}
    \caption{Performance of the approach from \cite{Holub:2009} on the Intel MTL.}
    \label{fig:speedup_mtl_holub}
\end{figure}

\begin{figure*}[ht]
\centering
    \begin{tabular}{@{}c@{}c@{}}
        \subfigure[Speedup over ScanProsite]
            {\label{fig:speedup_mtl_prositescan_perl}
             \hspace{-2.2mm}
             \includeGraphics[clip=true, height=4.2cm, trim=4mm 7mm 0 0]
                             {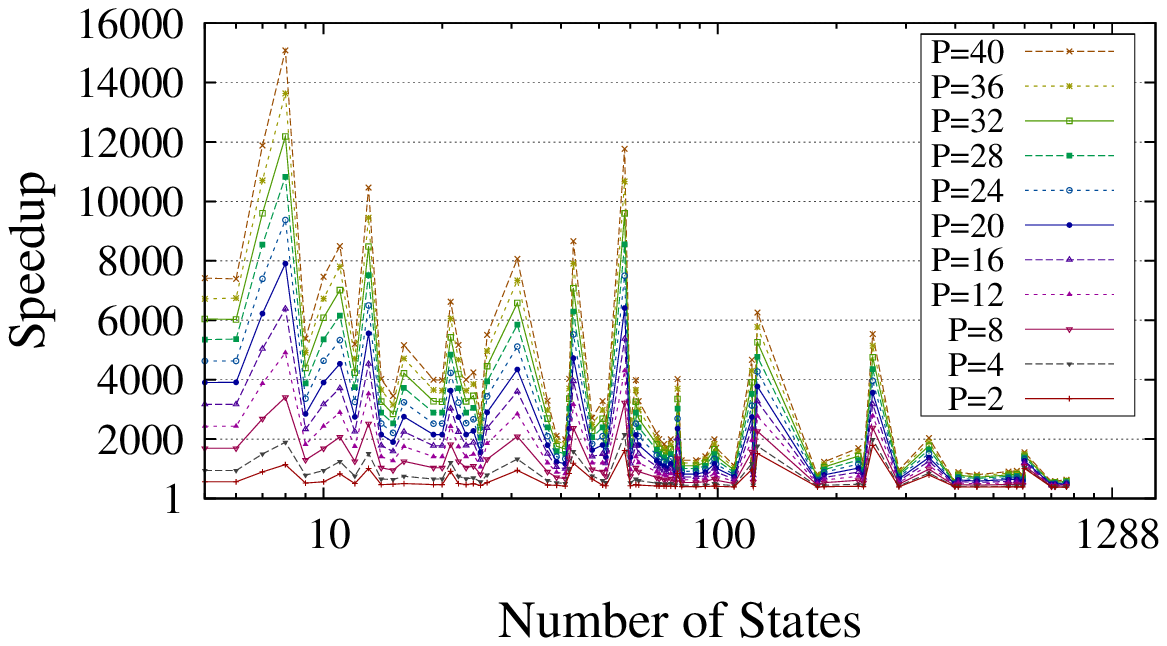}}
        &\hspace{-7mm}{
        \subfigure[Speedup over \texttt{grep}]
            {\label{fig:speedup_mtl_prositescan_grep}
             \includeGraphics[clip=true, height=4.2cm, trim=10mm 7mm 0 0]
                             {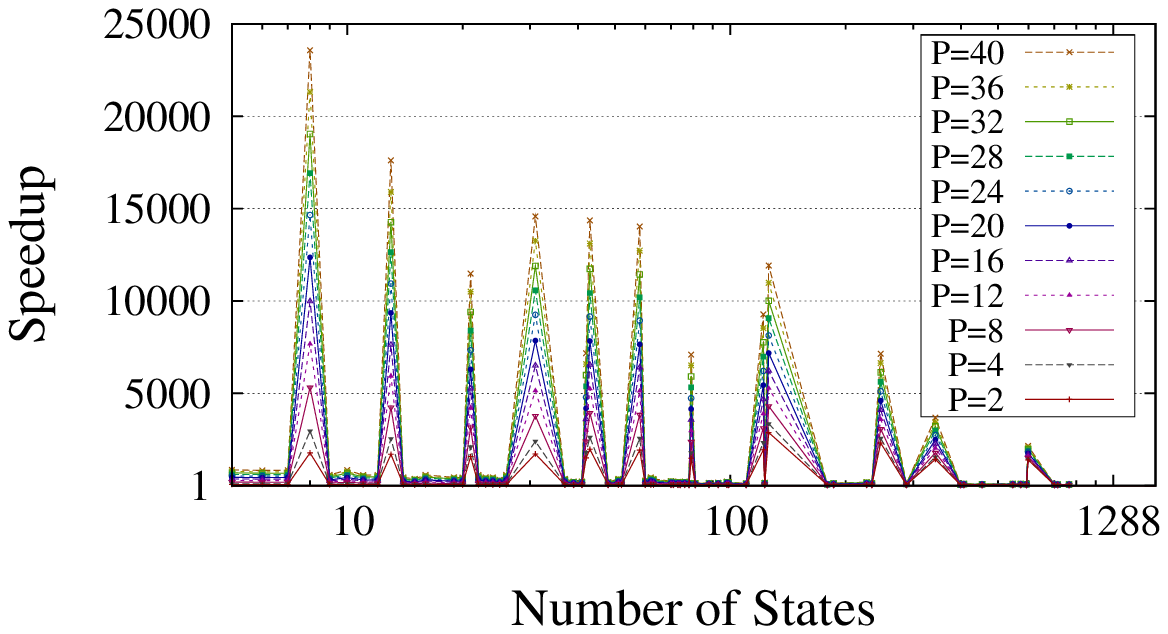}}}
        \vspace{-2mm}\\
    \end{tabular}
    \caption{Performance of our approach compared to ScanProsite~(a) and the UNIX \texttt{grep} utility~(b).}
    \label{fig:speedup_mtl_prositescan}
\end{figure*}

Another experiment conducted on the MTL is
the comparison to ScanProsite~\cite{ScanProsite:2002,ScanProsite},
which is the reference implementation
from the PROSITE protein database. ScanProsite is used
to detect signature matches in protein sequences.
The tool is implemented in Perl; it can be used
to find all substrings that match a certain
PROSITE pattern. We parameterized ScanProsite to find only
one match to compare with our optimized DFA matching algorithm which determines
whether an input string contains a certain pattern or not.
For a second comparison, we employed the UNIX grep utility with ScanProsite.
Grep constructs a DFA and uses the Boyer-Moore algorithm for matching~\cite{haertel2010};
it is faster than Perl which uses backtracking~\cite{cox2007}.
As shown in Fig.~\ref{fig:speedup_mtl_prositescan}, 
our algorithm using four symbol reverse lookahead is 559.3 to 15079.7 times faster 
than ScanProsite, and 
62.1 to 23572.0 times faster than the UNIX \texttt{grep} utility.

\subsection{Performance of Vectorized DFA Matching Using AVX2 Instruction Set Extensions}
\label{subsec:EvalSimd}
At the time of writing, CPUs supporting AVX2 instruction set extensions were
not commercially available. To validate our vectorized code, and to get an
indication on
the speedups obtainable with AVX2, we resorted to
Intel's SDE emulator~\cite{SDE}, version~4.46.0.  
SDE is not cycle-accurate, but it provides the number of machine
instructions ``executed'' by the emulated binary. We used
the number of executed machine instructions as the basis of
our performance comparison of scalar and vectorized code.
{\em Speedup\/} throughout Section~\ref{subsec:EvalSimd}
thus denotes a ratio of executed machine instructions, as
opposed to observed execution time.
Performance on real hardware, e.g., Intel's Haswell microarchitecture, must be
expected to vary to the extent of variations in the cycles per instruction
(CPI) between scalar and vectorized code.
For compilation of code with AVX2 intrinsics,
we used ICC version~12.1.4.

\begin{figure*}[ht]
\centering
    \begin{tabular}{@{}c@{}c@{}}
        \subfigure[Speedup w/o AVX2 for PROSITE]
            {\label{fig:speedup_par_opt_prosite}
             \hspace{-3.2mm}
             \includeGraphics[clip=true, height=4.3cm, trim=4mm 6mm 0 0]
                             {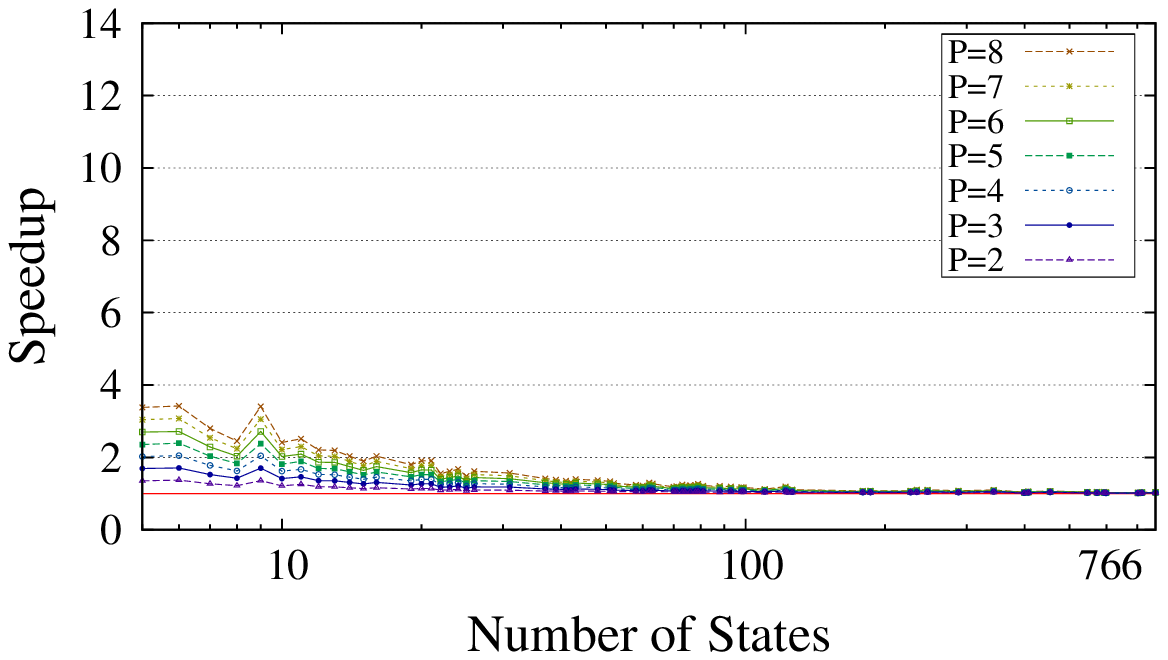}}
        &\hspace{-5.5mm}{
        \subfigure[Speedup w/ AVX2 for PROSITE]
            {\label{fig:speedup_simd_opt_prosite}
             \includeGraphics[clip=true, height=4.3cm, trim=12mm 6mm 0 0]
                             {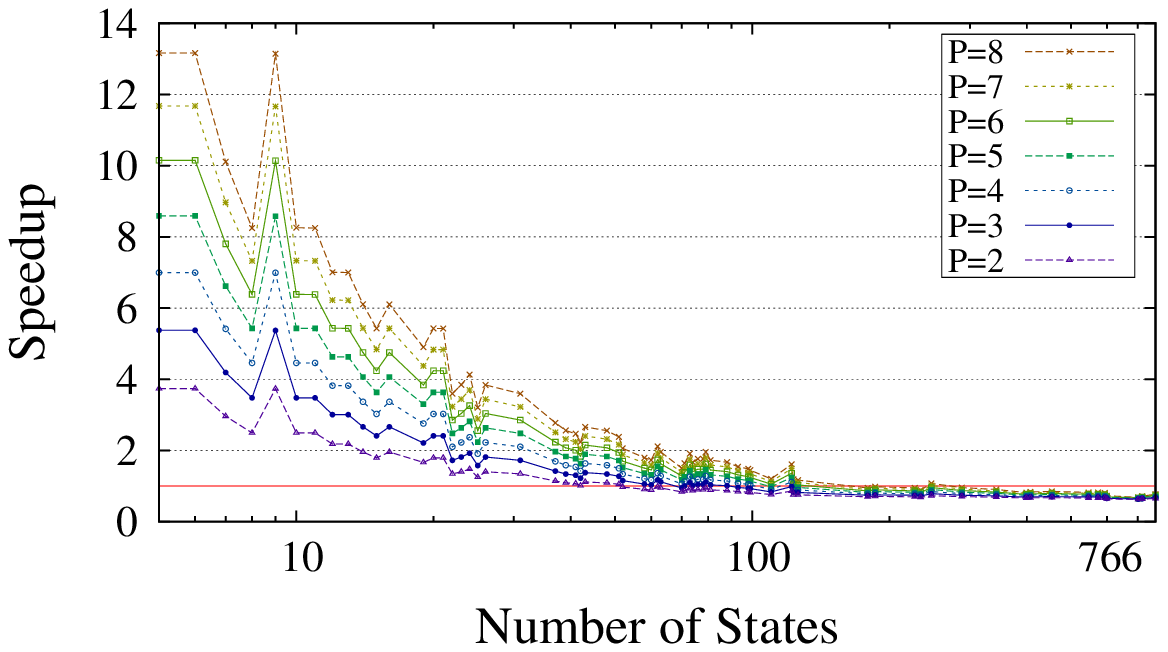}}
        }
        \vspace{-5mm}\\
        \subfigure[Speedup w/o AVX2 for PCRE]
            {\label{fig:speedup_par_opt_re}
             \hspace{-3.2mm}
             \includeGraphics[clip=true, height=4.3cm, trim=4mm 6mm 0 0]
                             {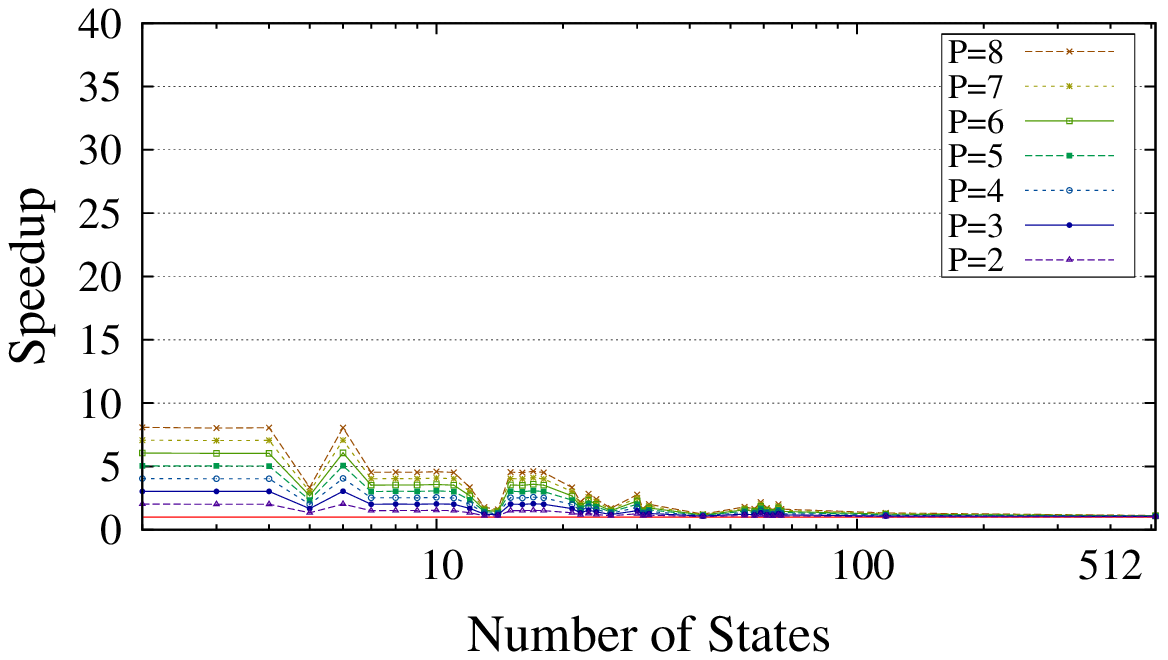}}
        &\hspace{-7.8mm}{
        \subfigure[Speedup w/ AVX2 for PCRE]
            {\label{fig:speedup_simd_opt_re}
             \includeGraphics[clip=true, height=4.3cm, trim=12mm 6mm 0 0]
                             {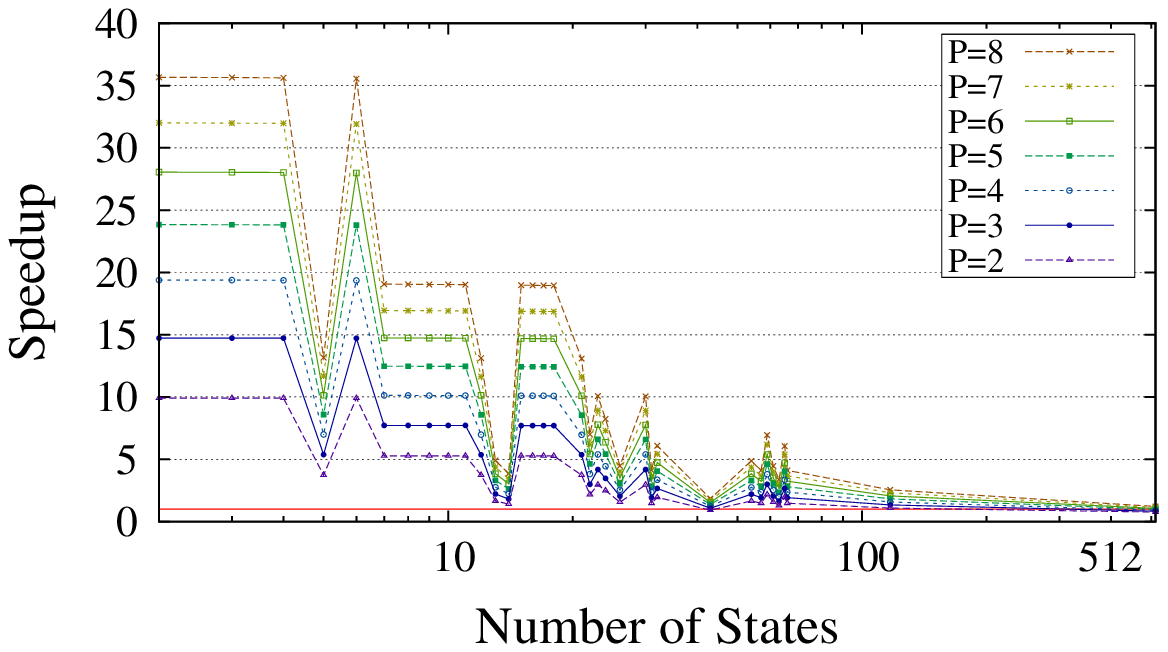}}}
        \\
    \end{tabular}
    \caption{Speedups from AVX2 8-fold vectorization for PROSITE patterns~$(b)$ and 
        PCRE~$(d)$. Fig.~\ref{fig:speedup_par_opt_prosite} and 
        Fig.~\ref{fig:speedup_par_opt_re} depict the achieved speedups 
        without vectorization.}
    \label{fig:speedup_simd}
\end{figure*}

Fig.~\ref{fig:speedup_simd} compares the speedups of scalar and
vectorized DFA membership tests
applying the \MaxNrIStates\ optimizations for one symbol lookahead.
Eight-fold vectorization using AVX2 instructions achieved
a 4.45x improvement over scalar code.
Furthermore, we observed that the expected speedups
on an emulated 8-core machine with AVX2 are on the order of magnitude 
of a 40-core node of the MTL. The speedups range from 
1.2x to 35.7x for PCRE and 0.7x to 13.2x for PROSITE. Speedup is again 
proportional to \NrProcessors, showing that vectorization is in-line
with our complexity analysis from Section~\ref{sec:TimeComplexity}.
We observed a 16.0\% speed-down on average~(maximum 31.5\%) with 
very large DFAs due to the overhead of our  
parallelization for SIMD operations. This speed-down is 
not innate to the algorithms, but due to our implementation, in particular
the way chunks are allocated to SIMD vector units. The speed-down can be
overcome by increasing the problem size (which we refrained from, to keep
experiments consistent).

\subsection{DFA Matching Performance on Cloud Computing Architectures}
\label{subsec:EvalMPI}

\begin{figure*}[ht]
\centering
    \begin{tabular}{@{}c@{}c@{}}
        \subfigure[PROSITE speedups on EC2]
            {\label{fig:speedup_mpi_par_opt_prosite}
             \hspace{-2.2mm}
             \includeGraphics[clip=true, height=4.3cm, trim=4mm 6mm 0 0]
                             {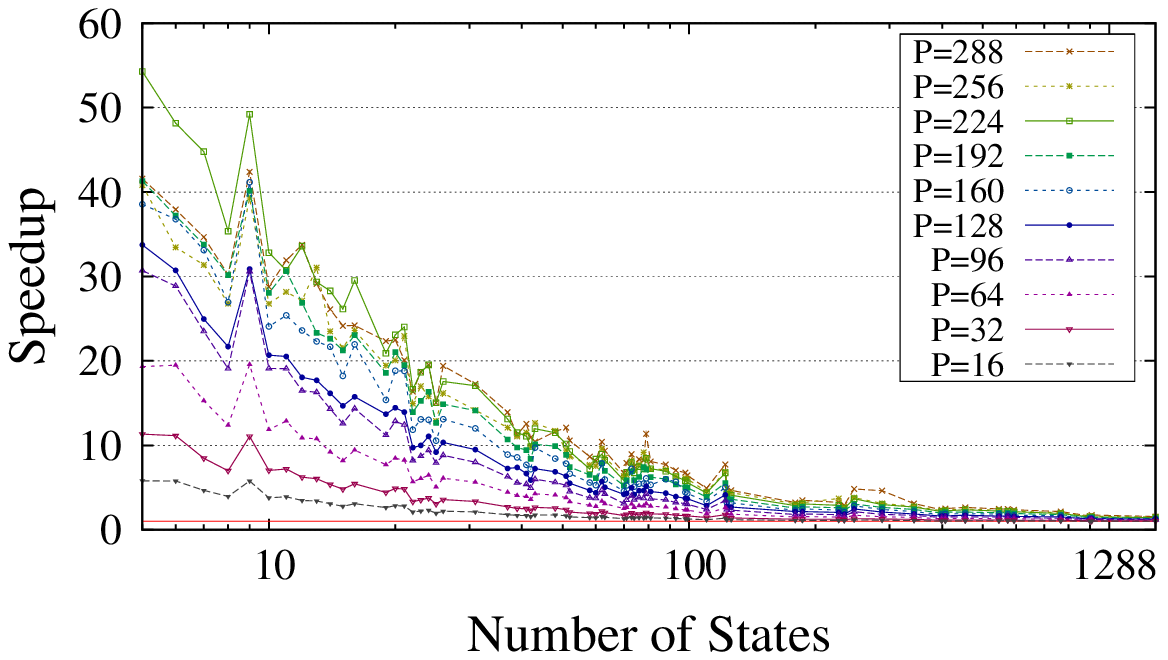}}
        &\hspace{-7mm}{
        \subfigure[PROSITE communication overhead]
            {\label{fig:speedup_mpi_par_opt_prosite_merging}
             \includeGraphics[clip=true, height=4.35cm, trim=12mm 6mm 0 0]
                             {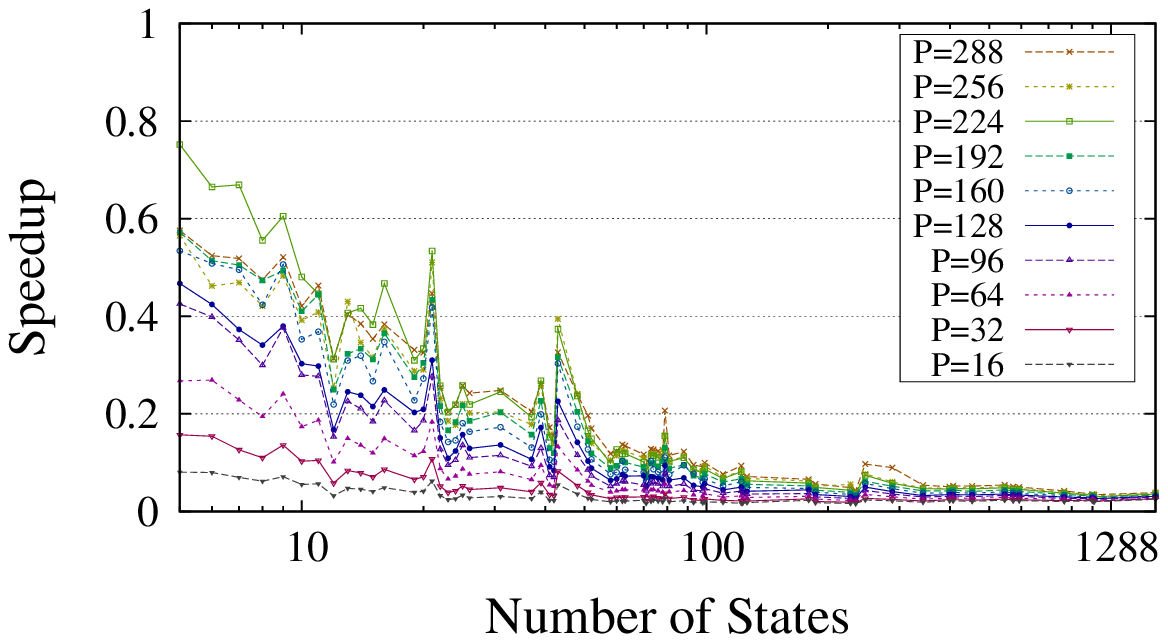}}}
        \vspace{-4mm}\\
        \subfigure[PCRE speedups on EC2]
            {\label{fig:speedup_mpi_par_opt_re}
             \hspace{-2.2mm}
             \includeGraphics[clip=true, height=4.3cm, trim=4mm 6mm 0 0]
                             {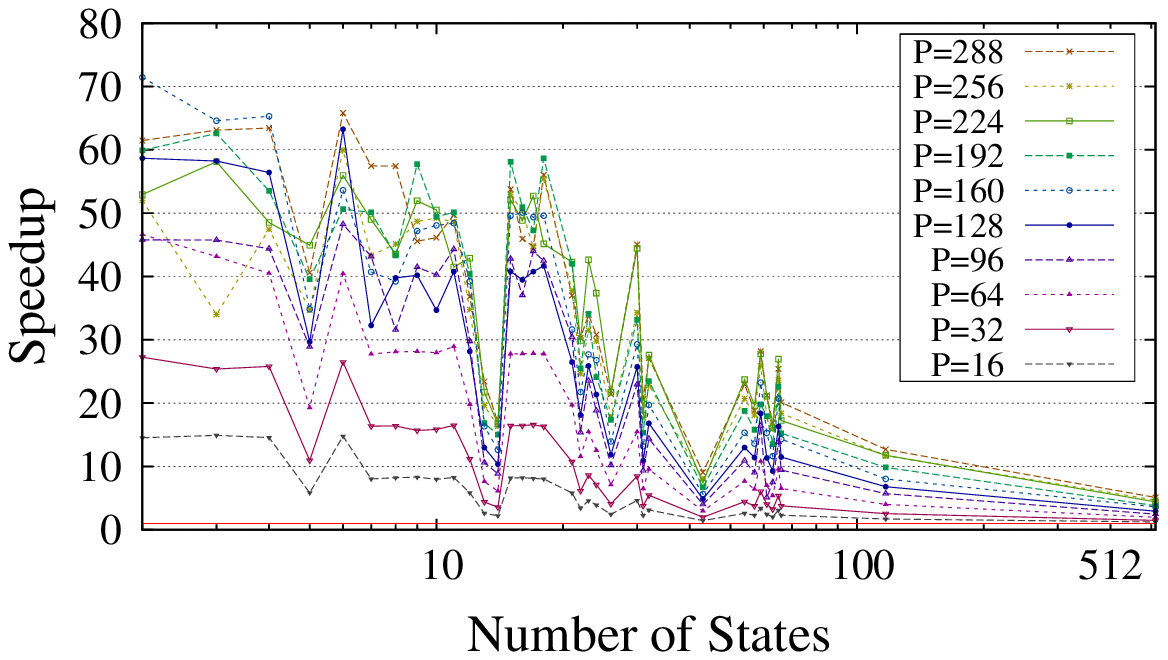}}
        &\hspace{-7mm}{
        \subfigure[PCRE communication overhead]
            {\label{fig:speedup_mpi_par_opt_re_merging}
             \includeGraphics[clip=true, height=4.35cm, trim=12mm 6mm 0 0]
                             {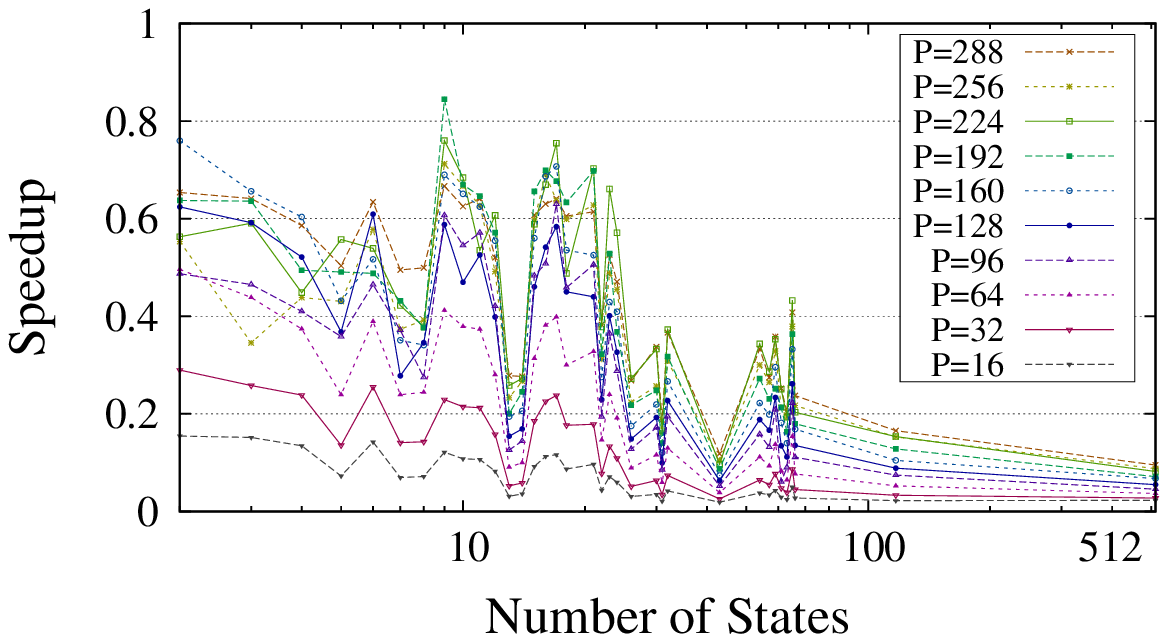}}}
        \\
    \end{tabular}
    \caption{Performance on cloud computers~(cc2.8xlarge instance type on EC2) with 
        PROSITE patterns~$(a)$ and PCRE~($c$). 
        Fig.~\ref{fig:speedup_mpi_par_opt_prosite_merging} and 
        Fig.~\ref{fig:speedup_mpi_par_opt_re_merging} show proportional overhead
        of MPI communication cost with PROSITE patterns and PCRE respectively.}
    \label{fig:speedup_mpi_par}
\end{figure*}

We conducted experiments on the Amazon EC2 elastic computing cloud
to determine the performance of our speculative DFA matching algorithms
on distributed-memory architectures, employing up to 20~nodes and 288~cores.
We explored the adaptation of our load-balancing approach to
EC2 nodes of varying processing capacities.
For the convenience of operating a cluster of EC2 nodes,
we used StarCluster~\cite{StarCluster} version~0.93.3, 
an open source cluster-computing 
toolkit for EC2.

Experiments were conducted on up to 20~cc2.8xlarge EC2 instances, which
provide 16~cores per node.
We again employed four symbols reverse lookahead with our approach.
For reasons discussed in Section~\ref{sec:cloudImplementation},
we occupied 15 out of 16~cores,
resulting 
in 300~cores in total. For better presentation, 
Fig.~\ref{fig:speedup_mpi_par} shows our experimental results with
cluster sizes that are a multiple
of 32~cores. 
We used version~1.4
of the MPICH2 MPI implementation~\cite{MPICH2}, which provided 
higher performance on EC2 than
OpenMPI~\cite{OpenMPI} version~1.4.3. 

We found the communication costs between nodes an important factor
on the EC2 cloud.
We instrumented our matching framework to determine the communication
overhead. 
Fig.~\ref{fig:speedup_mpi_par_opt_prosite} and 
Fig.~\ref{fig:speedup_mpi_par_opt_re} show speedups including communication
costs, and
Fig.~\ref{fig:speedup_mpi_par_opt_prosite_merging} and 
Fig.~\ref{fig:speedup_mpi_par_opt_re_merging} 
depict the ratio of time spent for communication 
to overall execution time.
Although graphs shown in 
Fig.~\ref{fig:speedup_mpi_par} are irregular due to the instability
of the EC2 network, we can observe that the
communication costs increase as the number of processors grows. The
communication cost decreases as \NrStates\ grows, which follows
from the fact that
the required matching time increases with \NrStates, which de-emphasizes
communication costs. This observation explains why PCRE benchmarks,
which show smaller $\MaxNrIStates$ constants and smaller sets of possible
initial states, are more impacted by
communication overhead 
than PROSITE benchmarks.

\begin{table*}[htbp]
\centering
\begin{tabular}{|c|c||c|c|c||c|c|c|}
\hline
\multicolumn{2}{|c||}{EC2 Instances} &
    \multicolumn{3}{c||}{PROSITE}&\multicolumn{3}{c|}{PCRE}\\
    \hline
    Fast& Slow& Min. & Avg. & Max. & Min. & Avg. & Max. \\
    \hline
    \hline
0 & 5 & 0.0036 & 0.0102 & 0.0298 & 0.0046 & 0.0149 & 0.0696 \\
1 & 4 & 0.0031 & 0.0086 & 0.0360 & 0.0036 & 0.0108 & 0.0355 \\
2 & 3 & 0.0033 & 0.0090 & 0.0275 & 0.0062 & 0.0121 & 0.0427 \\
3 & 2 & 0.0051 & 0.0116 & 0.0248 & 0.0083 & 0.0186 & 0.0707 \\
4 & 1 & 0.0060 & 0.0130 & 0.0700 & 0.0093 & 0.0194 & 0.0707 \\
5 & 0 & 0.0056 & 0.0119 & 0.0305 & 0.0095 & 0.0188 & 0.0412 \\
  \hline
\end{tabular}
\caption{Effectiveness of the load-balancing scheme on six configurations
of inhomogeneous clusters consisting of
two types of Amazon EC2 instances, m2.4xlarge and cc2.8xlarge}
\label{tab:weightFeasibility}
\end{table*}

The goal of our load-balancing mechanism is to determine chunk sizes such that
all processing cores are utilized equally, i.e., take equally long for matching their assigned chunk.
Processor capacities are incorporated in the form
of weights (see Eq.~\eqref{eq:WeightedL}).
To evaluate the load-balance achieved with our speculative DFA matching computations,
we used two different types of Amazon EC2 instances, namely cc2.8xlarge
(denoted as ``Fast'' in Table~\ref{tab:weightFeasibility}) and
m2.4xlarge (denoted as ``Slow'' in the second column of the table).
Although the clock frequencies of these EC2 instance types
do not differ much (see Table~\ref{tab:HWSpec}), the difference of
the processor capacities is observable.
We found the ratio of actual processing capacity of 
cc2.8xlarge compared to m2.4xlarge to be 1.41 on average, meaning that cc2.8xlarge
on average computes 41\% faster than m2.4xlarge.
For this experiment, we allocated inhomogeneous clusters consisting
of various numbers of cc2.8xlarge and m2.4xlarge instances.
To get an indication for the effectiveness of our load-balancing scheme,
we determined the 
standard-deviations of DFA matching times across all cores of such inhomogeneous
EC2 clusters.
A balanced load would then be indicated by standard deviations close to zero.
E.g., the experiment from row~5 was conducted on a mix
of 
four cc2.8xlarge instances and one m2.4xlarge 
instance. The maximum observed standard deviation of execution times
was 7.0\%, with 0.6\% minimum standard deviation and 1.3\% average across the
PROSITE benchmark suite. 
During experiments, we noticed that capacities of cluster nodes could change
slightly across cluster invocations, making the re-estimation of processor capacities
necessary at cluster startup time.
(This is in line with the findings from
\cite{EC2Perf}, on performance unpredictability of cloud computing environments.)
Hence the adaptability of our load-balancing scheme with respect to processor capacities
is essential
on cloud computing environments. Our observed proportional standard 
deviations of 
execution times are very low, around 1\% on average, as shown in 
Table~\ref{tab:weightFeasibility}. In particular, the presented load-balancing scheme adapts
well to different configurations of inhomogeneous clusters.

\subsection{Performance Impact of Structural DFA Properties and Scalability to Large Input Sizes
\label{subsec:struct}}

\begin{figure}
\centering
    \begin{tabular}{@{}c@{}c@{}}
        \subfigure[PROSITE patterns]
            {\label{fig:speedup_against_q_prosite}
             \hspace{-2.7mm}
             \includeGraphics[clip=true, height=4.2cm, trim=4mm 6mm 0 0]
                             {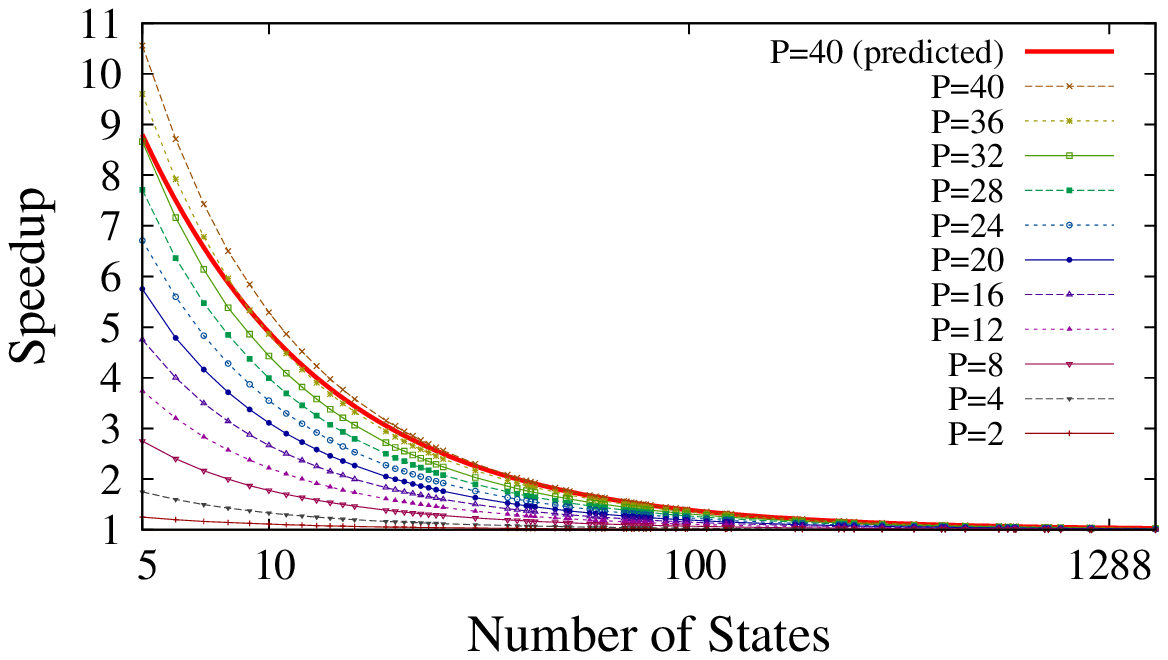}}
        &\hspace{-5mm}{
        \subfigure[PCRE]
            {\label{fig:speedup_against_q__re}
             \includeGraphics[clip=true, height=4.2cm, trim=12mm 6mm 0 0]
                             {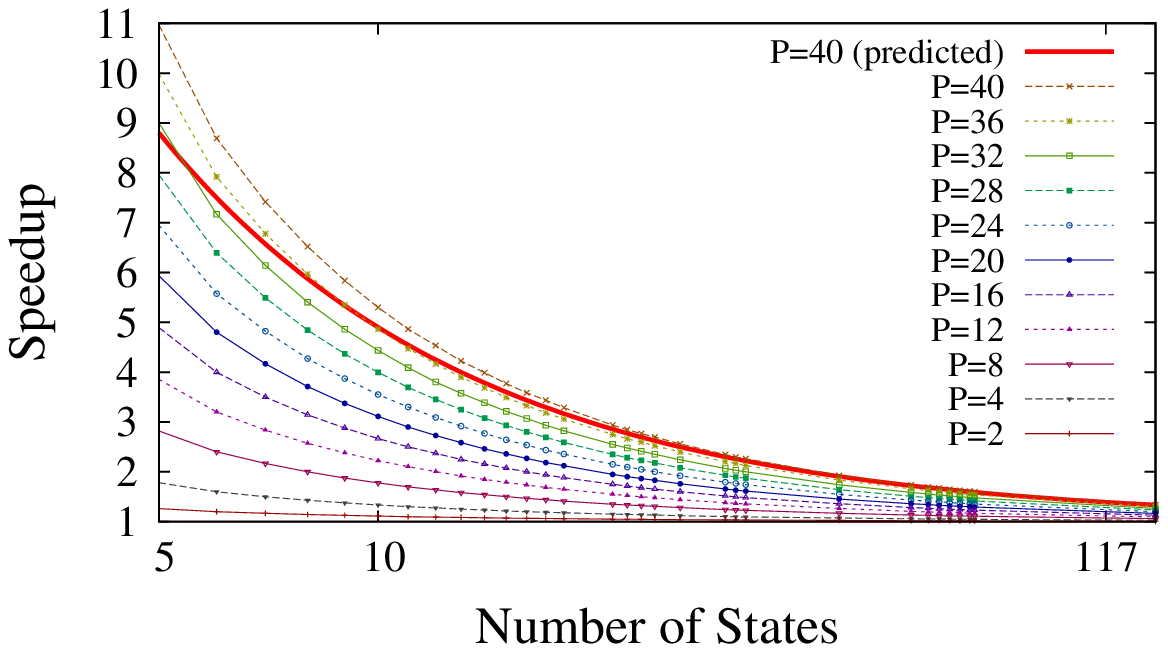}}}
        \\
    \end{tabular}
    \caption{Speedups on the Intel MTL node over sequential matching,
        without $\MaxNrIStates$ optimization. 
        Observed speedups closely follow the prediction from 
        Eq.~\eqref{eq:speedup}. The predicted
        speed-up for 40~cores is shown by
        the graph labelled ``P=40 (predicted)''.
    }
    \label{fig:speedup_against_q}
\end{figure}

If we regard the number of processing cores as a constant,
then the expected speedup of the proposed speculative DFA matching approach
solely depends on the number
of DFA states~\NrStates\ (see Eq.~\eqref{eq:speedup}).
Fig.~\ref{fig:speedup_against_q} shows
the speedups that we obtain without the $\MaxNrIStates$ optimization. The
line labeled ``Theoretical Speedup for P=40'' depicts
the expected speedup according to Eq.~\eqref{eq:speedup}. The observed speedups
closely follow this trend.
Speedups without $\MaxNrIStates$ optimization are thus very regular. This
contrasts the speedups obtained with $\MaxNrIStates$ optimization enabled,
because DFAs with the same number of states~\NrStates\ can vary drastically
in  $\MaxNrIStates$, i.e., their maximum number of
possible initial states. As a consequence, the number of DFA states~\NrStates\ 
cannot predict the performance of the  $\MaxNrIStates$ optimization. We
will thus consider to what extent different numbers of reverse lookahead
symbols reduce the number of possible initial start states.

\begin{figure*}[ht]
\centering
    \begin{tabular}{@{}c@{}c@{}}
        \subfigure[PCRE]
            {\label{fig:ISetMax_re}
             \hspace{-2.2mm}\includeGraphics[clip=true,
                                             width=5.6cm,
                                             trim=0.1mm -3.5mm 0 0]{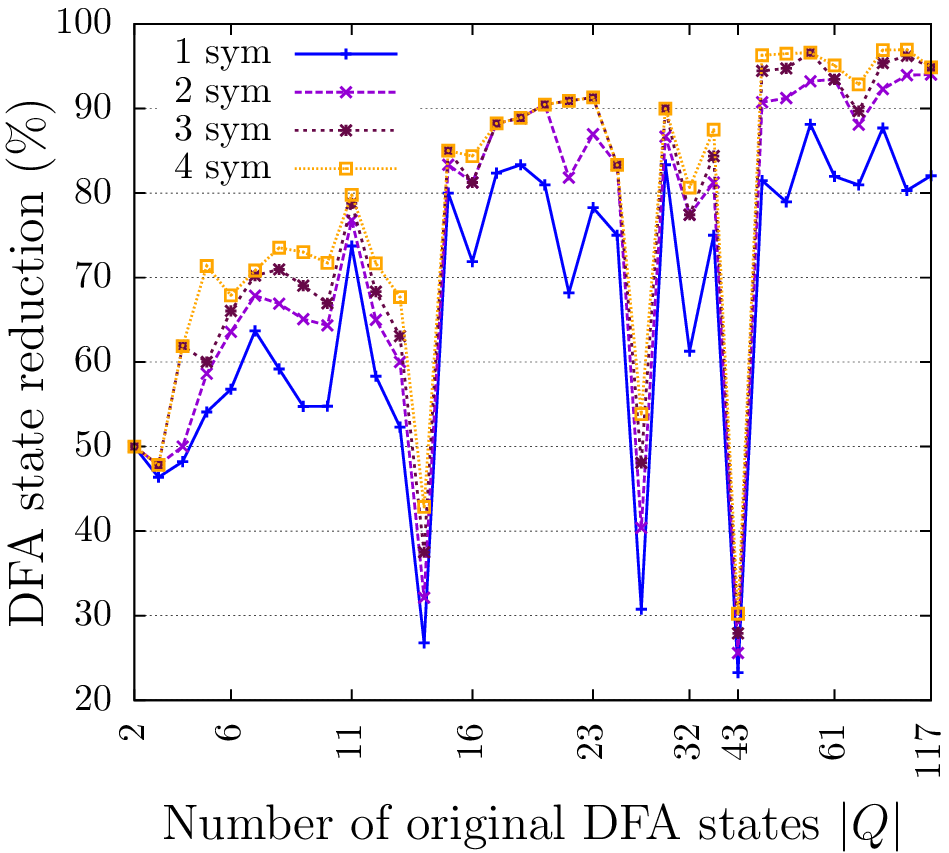}}
        &\hspace{-5mm}\raisebox{-1.54mm}{%
        \subfigure[PROSITE]
            {\label{fig:ISetMax_prosite}
             \includeGraphics[clip=true, 
                                             width=6.87cm,
                                             trim=6mm -3.5mm 0 0]{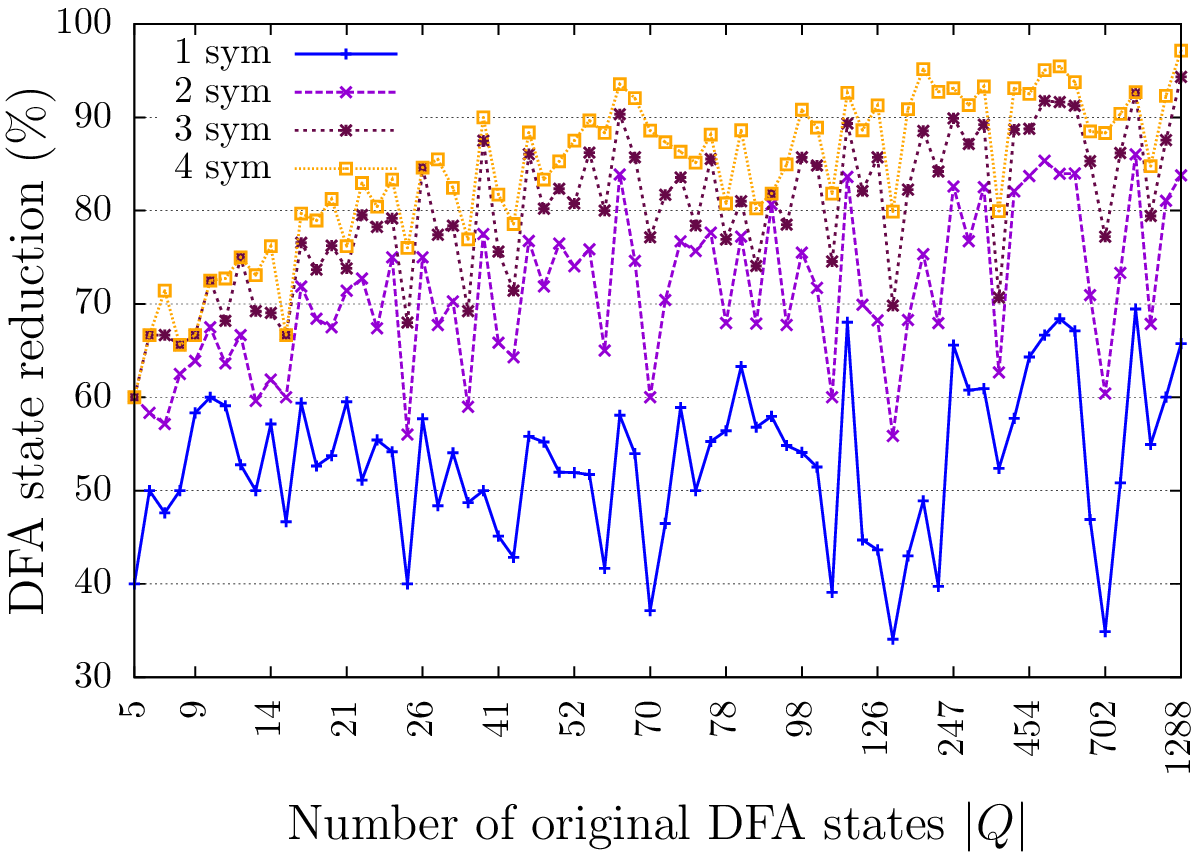}}}
        \\
    \end{tabular}
    \caption{Original sizes of DFAs and the reduction rate for various numbers of 
        reverse lookahead symbols:  $\MaxNrIStates_{,1}$,
        $\MaxNrIStates_{,2}$, $\MaxNrIStates_{,3}$ and $\MaxNrIStates_{,4}$.}
    \label{fig:ISetMax}
\end{figure*}

We investigated the sizes of possible initial state sets for the PCRE and PROSITE
benchmark suites for 1, 2, 3 and 4 reverse lookahead symbols.
Fig.~\ref{fig:ISetMax} depicts the number of states~\NrStates\ and the
initial state reduction rates for \NrPCREs~PCRE benchmark DFAs and \NrPROSITEs~PROSITE protein patterns.
(For DFAs with the same number of states, the possible initial state set sizes were averaged.)
For example, the rightmost, largest DFA in Fig.~\ref{fig:ISetMax}(b) consists of \NrStates=1288~states.
One-symbol
reverse lookahead eliminates $65$\% of \NrStates. Two-symbol, three-symbol
and four-symbol lookahead remove $83$\%, $94$\% and $97$\% of all states.
It follows from Fig.~\ref{fig:ISetMax} that DFAs exhibit variations in
their initial state reduction rate, which has an impact on matching performance. For example,
the PCRE DFA with 43~states (Fig.~\ref{fig:ISetMax}(a)) shows $\MaxNrIStates$ reduction rates below $31$\%.
The resulting impact on performance can be observed in Fig.~\ref{fig:speedup_mtl_par}(c)
with the Intel MTL node and in Fig.~\ref{fig:speedup_mpi_par}(c) for the EC2 cloud. 

The average size of possible initial state sets for 1, 2, 3 and 4 reverse lookahead symbols compared
to the overall number of states~\NrStates\
is depicted in Table~\ref{tab:ISetMax}. Applying a reverse lookahead of one symbol to the PCRE benchmarks
reduces the number of possible initial states on average to 33.7\% of the original states.
Applying 2, 3 and 4
reverse lookahead symbols
yielded further reductions of 7\%, 10\% and 12\% over~\NrStates.
With the PROSITE benchmarks, one symbol reverse lookahead reduced on average to 47.2\% of the original states.
Applying 2, 3 and 4 reverse lookahead symbols yielded further reductions of 18\%, 26\% and 31\% over~\NrStates.
The profitability of reverse lookahead is a static property of DFAs, which is reflected in this data: while
for PCRE one symbol lookahead already yields a large reduction on the number of states, lookahead $>$2 symbols
does not provide substantial improvement. However, with PROSITE, one symbol reverse lookahead provided
a smaller improvement, while reverse lookahead up to 4~symbols yielded steady gains.

\begin{table}[ht]
    \centering
    \begin{tabular}{|c|c|c|c|c|c|}
        \hline
        $r$ & 0 & 1 & 2 & 3 & 4 \\
        \hline
        \hline
        PCRE & 100\% & 33.7\% & 26.4\% & 23.7\% & 21.7\% \\
        PROSITE & 100\% & 47.2\% & 29.2\% & 20.5\% & 16.0\%  \\
        \hline
    \end{tabular}
    \caption{Average size of $\MaxNrIStates_{,r}$ compared to~\NrStates,\ 
    for $r$ reverse lookahead symbols}
    \label{tab:ISetMax}
\end{table}

Because of the exponential time complexity to compute~$\MaxNrIStates_{,r}$,
there is a trade-off between the overhead of the reverse lookahead computation
and the obtainable performance gains.
To quantify this overhead, we investigated the cost of reverse lookahead
computations on an Intel Xeon 5120 CPU.
Fig.~\ref{fig:ISetCompTime}(a) shows the overhead in microseconds
to compute~$\MaxNrIStates_{,r}$ for an example DFA of $\NrStates=5$ up to three reverse lookahead characters.
As expected, the overhead is exponential in the size of~$\Sigma$.
Fig.~\ref{fig:ISetCompTime}(b) depicts the overhead for increasing numbers of states.
Because~$\MaxNrIStates_{,r}$ is a static property of a DFA, it can be computed off-line,
and then loaded when the matching operation is performed. This way the overhead
can be avoided with DFAs that are matched many times (e.g., protein patterns from databases). 

\begin{figure*}[ht]
    \centering
    \begin{tabular}{@{}c@{}c@{}}
        \subfigure[]
            {\label{fig:ISetCompTime_per_sigma_re}
             \hspace{-2.2mm}
             \includeGraphics[clip=true, height=4.3cm, trim=4mm 6mm 0 0]
                             {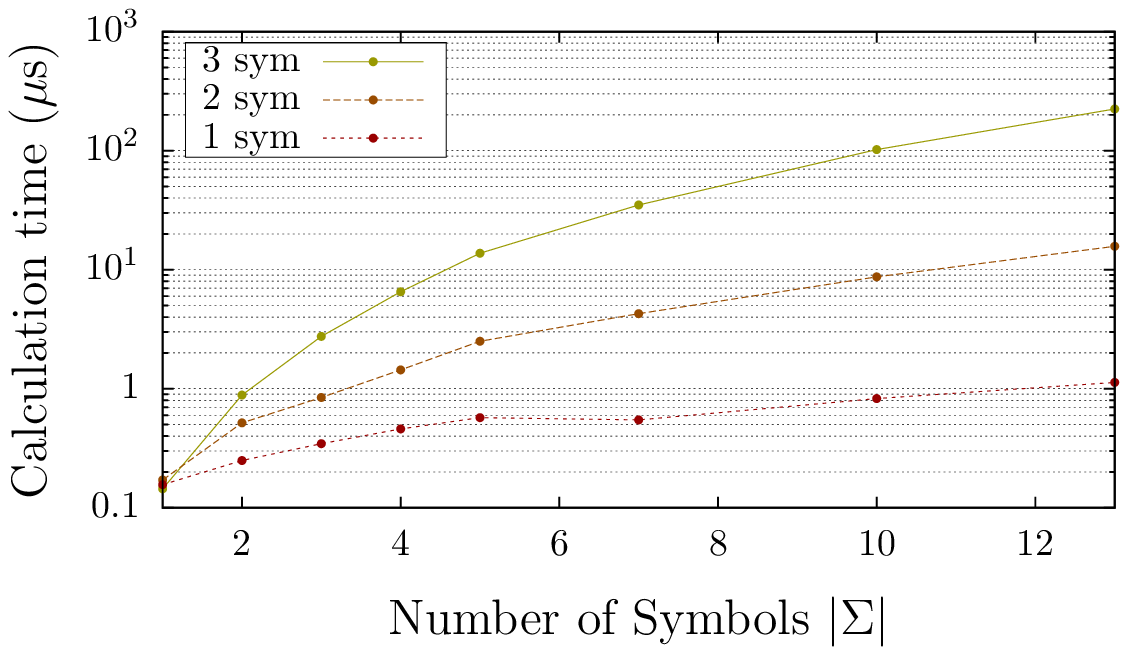}}
        &\hspace{-10mm}{
        \subfigure[]
            {\label{fig:ISetCompTime_per_Q_prosite}
             \includeGraphics[clip=true, height=4.3cm, trim=12mm 6mm 0 0]
                             {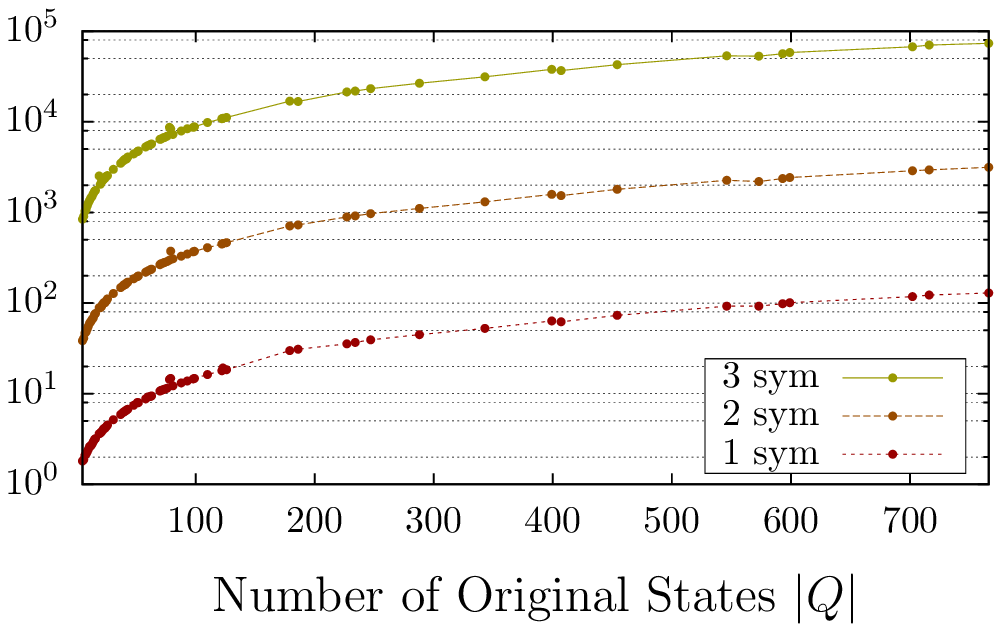}}}
        \\
    \end{tabular}
    \caption{Required overhead due to $\MaxNrIStates_{,r}$ calculation over $|\Sigma|$~$(a)$ and \NrStates~$(b)$.}
    \label{fig:ISetCompTime}
\end{figure*}

We have experimentally evaluated to what extent the size of the input string affects the DFA matching performance of our approach.
The order of magnitude of the speedups from Eq.~\eqref{eq:speedup} does not contain the size of
the input ($n$), which is reflected by the performance
data obtained on the Intel MTL node: for input sizes of \SI{1}{MB}, \SI{100}{MB}
and \SI{10}{GB}, the obtained speedups
are almost identical~(Fig.~\ref{fig:mtl_speedup_opt_prosite_per_input}). Our algorithms thus scale well with respect to large input sizes on the MTL shared memory
multicore architecture.

\begin{figure}
\centering
    \begin{tabular}{c}
    \subfigure
        {
         \includeGraphics[clip=true, height=4.3cm, trim=4mm 7mm 0 0]
                         {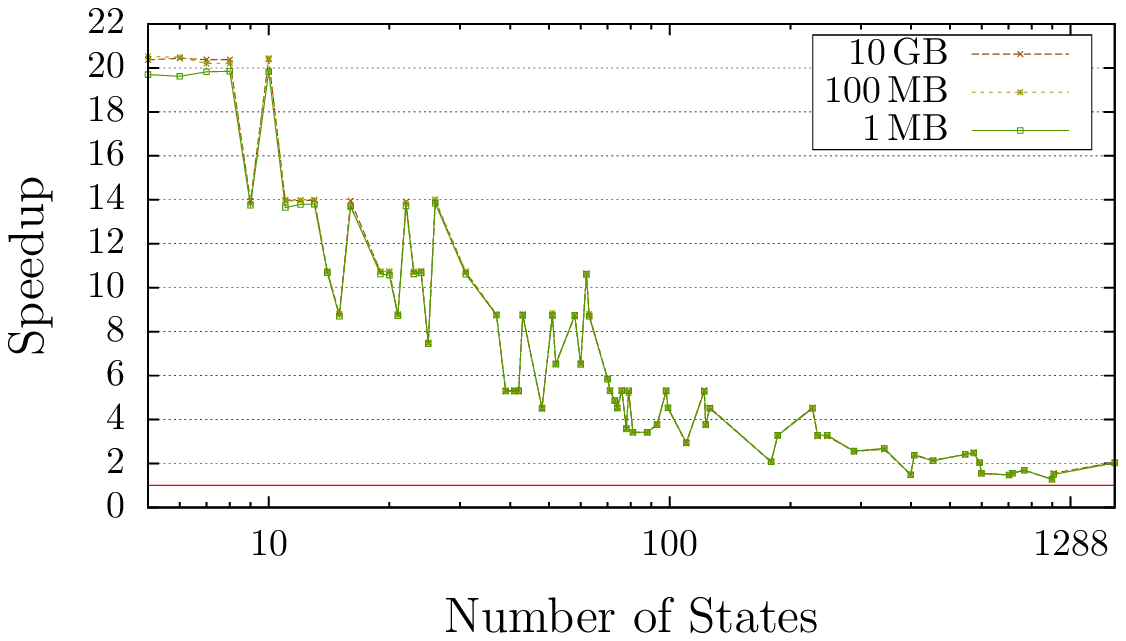}}
    \vspace{-2mm}
    \end{tabular}
    \caption{Speedups for varying input sizes and P=40 cores on the Intel MTL node.\label{fig:inputs}}
    \label{fig:mtl_speedup_opt_prosite_per_input}
\end{figure}

\begin{figure}[ht]
\centering
    \begin{tabular}{@{}c@{}c@{}}
        \subfigure[PROSITE speedup]
            {\label{fig:speedup_mpi_per_input}
             \hspace{-5.2mm}
             \includeGraphics[clip=true, height=4.3cm, trim=4mm 6mm 0 0]
                             {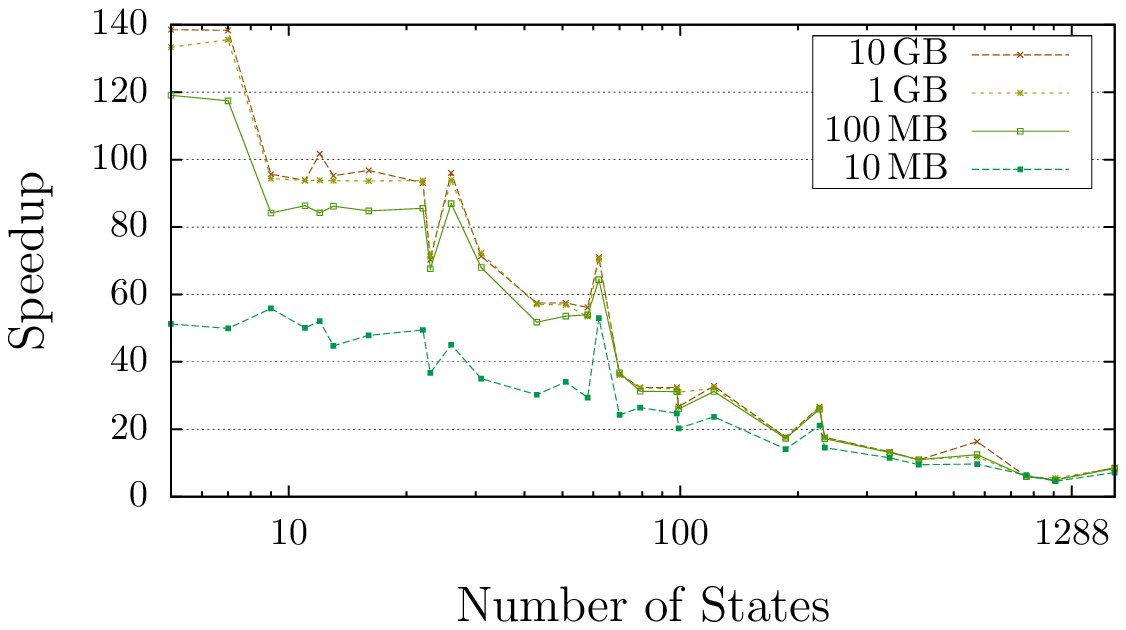}}
        &\hspace{-4.5mm}{
        \subfigure[PROSITE communication overhead]
            {\label{fig:comm_cost_mpi_per_input}
             \includeGraphics[clip=true, height=4.3cm, trim=12mm 6mm 0 0]
                             {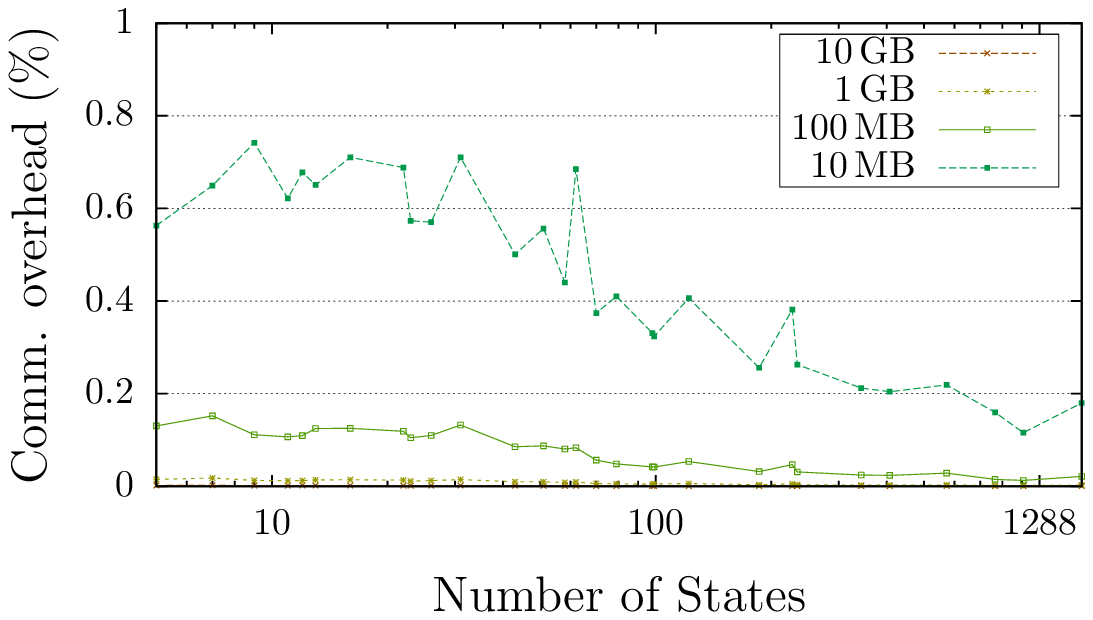}}}
        \\
    \end{tabular}
    \caption{Performance variation~$(a)$ over different sizes of inputs and its 
        proportional communication overhead~$(b)$ with PROSITE patterns on 
        cloud computers~(cc2.8xlarge instance type on EC2).}
    \label{fig:mpi_comm_cost}
\end{figure}

Fig.~\ref{fig:mpi_comm_cost}(a) depicts the performance for input sizes of
\SI{10}{MB}, \SI{100}{MB}, \SI{1}{GB} and \SI{10}{GB} for the PROSITE patterns
executed on EC2 for 288~cores.  Unlike shared memory architectures, our DFA
matching algorithm achieves
higher performance for long (\SI{10}{GB}) input. This is due to the communication
costs, which depend on the DFA size (for transmitting \StateMap-vectors with
the final state reduction), but which are independent of the DFA input size.
Larger DFA inputs incur longer overall chunk matching times, which
de-emphasize the high (but constant) communication costs for a given DFA.
Fig.~\ref{fig:mpi_comm_cost}(b) depicts the proportional communication costs
that we measured for varying input sizes. It follows that for \SI{1}{GB} and
\SI{10}{GB}
input sizes, the proportion of communication overhead with respect to the
overall execution time is close to zero. However, with input sizes of \SI{100}{MB} and
especially \SI{10}{MB}, the time spent for communication constitutes a large part of
the overall execution time.  As shown in Fig.~\ref{fig:mpi_comm_cost}(a),
our DFA matching approach scaled well for input-sizes of up to \SI{10}{GB} on the EC2
computing cloud. 

For sake of completeness, we state the execution time costs for computing DFAs
from PROSITE protein patterns. 
We used Grail+ to create nondeterministic finite automata (NFAs) from regular expressions,
convert NFAs to DFAs, and minimize DFAs. We did not apply parallel versions
of algorithms for DFA creation and minimization~\cite{psubset,Tewari:2002}.  
On average, it took \SI{8}{min} \SI{21.5}{s} to convert a PROSITE pattern to a minimal
DFA, and \SI{4}{min} \SI{35.8}{s} to convert a pattern to a non-minimal DFA. Our
speculative DFA matching approach is suitable to both minimal and non-minimal DFAs.
\section{Related Work}\label{sec:RelatedWork}
Locating a string in a larger text has applications
with text editing, compiler front-ends and web browsers,
Internet search engines, computer security, and DNA sequence analysis. 
Early string
searching algorithms such as Aho--Corasick~\cite{Aho:1975}, 
Boyer--Moore~\cite{Boyer:1977} and Rabin--Karp~\cite{Karp:1987}
efficiently match a finite set of input strings
against an input text. 

Regular expressions allow the specification of
infinite sets of input strings.
Converting a regular expression to a DFA for DFA membership tests is a 
standard technique to perform regular expression matching. 
The specification of virus signatures in intrusion
prevention systems~\cite{Brumley2006,Sommer2003,Roesch1999} and
the specification of DNA 
sequences~\cite{sigrist2010prosite,boeckmann2003swiss}
constitute recent applications of regular expression matching with DFAs.

Considerable research effort has been
spent on parallel algorithms for DFA membership tests.
Ladner et al.~\cite{Ladner:1980} applied the parallel prefix computation 
for DFA membership tests with Mealy machines.
Hillis and Steele~\cite{Hillis:1986} applied parallel prefix computations 
for DFA membership tests on the 65,536 processor Connection Machine.
Ravikumar's survey~\cite{ParallelFA1998} shows how DFA membership tests 
can be stated as a chained product of matrices.
Because of the underlying parallel prefix computation, all three approaches
perform a DFA membership test on input size~$n$ 
in $\BigO(\log(n))$ steps, requiring $n$~processors. Their algorithms 
handle arbitrary
regular expressions, but the underlying assumption of a massive number 
of available
processors can hardly be met in most practical settings. 
Misra~\cite{Misra:2003} derived another $\BigO(\log(n))$ string
matching algorithm. The number of required processors is
on the order of the product of the two string lengths and hence not 
practical.

A straight-forward way to exploit parallelism with DFA membership tests is
to run a single DFA on multiple input streams in parallel, or to run 
multiple DFAs in parallel. This approach has been taken by
Scarpazza et al.~\cite{ScarpazzaVP07} with a DFA-based string matching 
system for network security on the IBM Cell BE processor.
Similarly, Wang et
al.~\cite{WangHL10} investigated parallel architectures for packet
inspection based on DFAs. Both approaches assume multiple input streams 
and a vast number of
patterns (i.e., virus signatures), which is common with network security 
applications. However, neither
approach parallelizes the DFA membership algorithm itself, which is 
required to improve
applications with single, long-running membership tests such as DNA 
sequence analysis.

Scarpazza et al.~\cite{ScarpazzaVP07} utilize the SIMD units of the Cell BE's synergistic
processing units to match multiple input streams in parallel. However, 
their vectorized DFA matching
algorithm contains several SISD instructions and the reported speedup 
from 16-way vectorization is only a factor of 2.51.
In contrast, our proposed 8-way vectorized DFA membership test avoids SISD 
instructions, achieving a speedup of 4.45 over the sequential version.

Recent research efforts focused on speculative computations to 
parallelize DFA membership tests.  
Holub and \v{S}tekr~\cite{Holub:2009} were the first to split the input
string into chunks and distribute chunks among available processors. Their
speculation introduces a substantial amount of redundant computation, which
restricts the obtainable speedup for general DFAs
to~$\BigO(\frac{\lvert P\rvert}{\lvert Q\rvert})$, where 
$\lvert P\rvert$ is
the number of processors, and $\lvert Q\rvert$ is the number of DFA
states. 
Their algorithm degenerates to a speed-down when
$|\States|$ exceeds the number of processors (see also 
Section~\ref{sec:ExperimentalResults},
Fig.~\ref{fig:speedup_mtl_holub}).
To overcome this problem, Holub and \v{S}tekr specialized their 
algorithm for
$k$-local DFAs. A DFA is $k$-local if for every word of length~$k$ and
for all states~$p,q\in Q$ it holds that
$\TFE(p, w) = \TFE(q,w)$. Starting the matching operation $k$ symbols 
ahead of a given
chunk will synchronize the DFA into the correct initial state by 
the time matching reaches the beginning of the chunk, which eliminates 
all speculative computation. Holub and  \v{S}tekr achieve
a linear speedup of $\BigO(\lvert P\rvert)$ for $k$-local automata.
Unlike Holub and \v{S}tekr's approach, our DFA parallelization avoids 
speed-downs altogether.
We use structural properties of general DFAs to limit the amount of 
speculation. In particular,
the restriction to $k$-local automata is not required. We have vectorized 
our speculative
matching routine, and we have extensively evaluated our approach on a 
40-core shared memory
architecture, for AVX2 vector instructions, and on the Amazon EC2 cloud 
infrastructure. 

Jones et al.~\cite{Jones2009} reported that with the IE~8 and Firefox 
web browsers
3--40\% of the execution-time is spent parsing HTML documents. To speed 
up browsing, Jones et al.\
employ speculation to parallelize token detection (lexing) of HTML 
language front-ends.
Similar to Holub and \v{S}tekr's $k$-local automata, they use the 
preceding~$k$
characters of a chunk to synchronize a DFA to a particular state. 
Unlike $k$-locality,
which is a static DFA property, Jones et al.\
speculate the DFA to be in a particular, frequently occurring DFA state 
at the beginning of a chunk. Speculation fails if the DFA turns 
out to be in a different state, in which case the chunk
needs to be re-matched. Lexing HTML documents results in frequent 
matches, and the structure of regular expressions is reported to be 
simpler than, e.g., virus signatures~\cite{Luchaup2011}. Speculation is 
facilitated by the fact that the state at the beginning
of a token is always the same, regardless where lexing started. A 
prototype implementation is reported to scale up to six of the eight 
synergistic processing units of the Cell BE.

The speculative parallel pattern matching (SPPM) approach by
Luchaup et al.~\cite{Luchaup2011,Luchaup2009} uses speculation to match 
the increasing network line-speeds
faced by intrusion prevention systems. SPPM DFAs represent virus 
signatures. Like Jones et al.,
DFAs are speculated to be in a particular, frequently occurring DFA state 
at the beginning of a chunk. SPPM starts the speculative matching 
at the beginning of each chunk. With every input character, a 
speculative matching process stores the encountered DFA
state for subsequent reference. Speculation fails if the DFA turns out 
to be in a different state at the beginning of a speculatively matched 
chunk. In this case re-matching continues until the DFA 
synchronizes with the saved history state (in the worst case, the
whole chunk needs to be re-matched). A single-threaded SPPM 
version is proposed to improve performance by issuing multiple 
independent memory accesses in parallel.
Such pipelining (or interleaving) of DFA matches is orthogonal to 
our approach, which
focuses on latency rather than throughput.

SPPM assumes all regular expressions to be suffix-closed, which is the 
common scenario with intrusion prevention systems; 
A regular expression is suffix-closed if matching a given string~$w$ 
implies that $w$ followed by any suffix is matched, too. A suffix-closed 
regular language has the property that 
$x\in L \Leftrightarrow\forall w \in\Sigma^*: xw\in L$.

Unlike SPPM and the approach by Jones et al., our speculative DFA 
matching approach does not rely on a heavily biased distribution of DFA 
state frequencies. Instead, we use static DFA properties to minimize 
speculative matching overhead. Our approach is not restricted to 
suffix-closed regular expressions, and our speculation does not rely on
the common case being a match (Jones et al.), or the common case being a
non-match (SPPM). To the best of our knowledge, we are the first to employ
SIMD gather-operations to fully vectorized the DFA matching process.
Our DFA membership test provides a load-balancing mechanism
for clusters and cloud computing
environments. Unlike previous approaches, our speculative matching 
algorithm cannot result in a speed-down. We
conducted an extensive experimental evaluation on a 40-core shared memory
architecture, on a simulator for AVX2 vector instructions,
and on the Amazon EC2 cloud infrastructure. Our benchmarks consist of
\NrPCREs\ regular expressions from
the PCRE library~\cite{PCRELib}, and
of \NrPROSITEs\ patterns from the PROSITE protein pattern
database~\cite{sigrist2010prosite}. 
We analyzed the complexity of our speculative matching algorithm, and we 
provide insight on achievable scalability on shared-memory and 
cloud-computing environments.
This paper is the extended, journal version of an informal one-page
abstract
presented at the 4th annual meeting of the Asian Association
for Algorithms and Computation~\cite{aaacweb},
and a preliminary technical report~\cite{PDFA}.

\section{Conclusions}\label{sec:Conclusion}
We have presented a speculative DFA pattern matching method
for shared-memory, SIMD and cloud computing environments.
Our parallel
matching algorithm exploits structural DFA properties to minimize
the speculative overhead.
To the best of our knowledge, this is the first speculative
DFA matching approach that is {\em failure-free\/}, i.e., (1)~it maintains
sequential
semantics, and (2)~it avoids speed-downs altogether.
On architectures with a SIMD gather-operation for indexed memory loads,
our matching operation is fully vectorized. Communication patterns specifically
for the characteristics of cloud computing environments are provided. 
The proposed load-balancing scheme uses
an off-line profiling step to determine the matching capacity of each
participating processor. Based on matching capacities, DFA matches
are load-balanced on inhomogeneous parallel architectures.
We have shown that our algorithms have a better
time complexity than previous work.
We conducted an extensive experimental evaluation
of PCRE and PROSITE benchmarks on
a 4~CPU (40~cores) shared-memory node of the Intel
Academic Program
Manycore Testing Lab (Intel MTL),
on the Intel AVX2 SDE simulator for 8-way
fully vectorized SIMD execution, and on a 20-node (288~cores)
cluster of the Amazon EC2 computing cloud.
We showed the scalability of our approach for DFAs of up to 1288~states, and
input-strings of up to \SI{10}{GB}.
Our results predict that speculative parallel DFA matching
can produce substantial speedups. Unlike previous methods, our technique
does not impose any restriction on the matched regular expressions.

\begin{acknowledgements}
Research partially supported by the National Research Foundation of 
Korea~(NRF) grants
funded by the Korean government~(MEST)~(grant no.~2010-0005234, 2012R1A1A2044562 and 
2012K2A1A9054713), through the Global Ph.D.\
Fellowship Program~2011 of the NRF (grant no.~2010-0008582),
and by the Intel Academic Program Manycore Testing Lab.
\end{acknowledgements}
\bibliographystyle{spmpsci}
\bibliography{paper}
\end{document}